\DeclareMathAccent{\wtilde}{\mathord}{largesymbols}{"65}
\DeclareSymbolFont{tipa}{T3}{cmr}{m}{n}
\DeclareMathAccent{\invbreve}{\mathalpha}{tipa}{16}
\theoremstyle{plain}
\newtheorem{thm}{Theorem}
\newtheorem{lem}{Lemma}
\newtheorem{proposition}{Proposition}
\begin{document}
\vskip 1cm

\thispagestyle{empty} \vskip 1cm


\title{{On Interference Channels with \\Gradual Data Arrival }}
\author{ Kamyar Moshksar\\
\small Vancouver, BC, Canada} \maketitle

\begin{abstract}
We study memoryless interference channels with gradual data arrival in the absence of feedback. The information bits arrive at the transmitters according to independent and asynchronous~(Tx-Tx asynchrony) Bernoulli processes with average data rate $\lambda$. Each information source turns off after generating a number of $n$ bits. In a scenario where the transmitters are unaware of the amount of Tx-Tx asynchrony, we say $\epsilon$ is an \textit{achievable outage level} in the asymptote of large~$n$ if (i) the average transmission rate at each transmitter is $\lambda$ and (ii) the probability that the bit-error-rate at each receiver does not eventually vanish is not larger than~$\epsilon$. Denoting the infimum of all achievable outage levels by $\epsilon(\lambda)$, the contribution of this paper is an upper bound (achievability result) on $\epsilon(\lambda)$. The proposed method of communication is a simple block transmission scheme where a transmitter sends a random point-to-point codeword upon availability of enough bits in its buffer. 
Both receivers that treat interference as noise or decode interference are addressed. 
\end{abstract}
\begin{IEEEkeywords}
Achievable Outage Level, Block Transmission, Gradual Data Arrival, Interference Channel, Tx-Tx Asynchrony.
\end{IEEEkeywords}
\section{Introduction}
The interference channel~(IC) is the basic building block in modelling ad hoc wireless networks of separate Tx-Rx pairs. The Shannon capacity region of interference channels has been unknown for more than thirty years. It is shown in \cite{et1} that the classical random coding scheme developed by Han and Kobayashi~\cite{HK} achieves within one bit of the capacity region of the two-user Gaussian~IC for all ranges of channel coefficients and signal-to-noise ratio values. 

One key assumption made in \cite{et1} and the references therein is that data is constantly available at the encoders, i.e., the transmitters are backlogged. This is in contrast with the reality of communication networks where the incoming bit streams at the transmitters are bursty in nature.  

There exist several papers that study a Gaussian IC with bursty data arrival.  The authors in~\cite{sim} address a cognitive interference channel where the data arrival is bursty at both the primary and secondary transmitters. The secondary transmitter regulates its average transmission power in order to maximize its throughput subject to two conditions, namely, the throughput of the primary user does not fall below a given threshold and the queues of both transmitters remain stable.\footnote{Stability of a queue is in the sense that the cumulative distribution function of the length of queue converges pointwise to a limiting distribution in the long run when the number of time slots grows to infinity.}  
The exact stability region of the two-user GIC was characterized in~\cite{pappas}. More recently, \cite{haenggi} introduced the notion of $\epsilon$-stability region in a static wireless network of multiple Tx-Rx pairs that are distributed spatially according to a Poisson point process and share a common frequency band under the random access protocol. 

The system model considered in this paper differs from those in \cite{sim}, \cite{pappas} and \cite{haenggi} in the following aspects: 
\begin{enumerate}
  \item References \cite{sim}, \cite{pappas} and \cite{haenggi} deal with the problem of interacting queues~\cite{Rao}. Each communicating pair is equipped with a feedback channel through which the receiver sends a request for retransmission to its affiliated transmitter in case a packet is lost. Consequently, the service rate at each queue becomes dependent on the contents of all queues. In the current paper, no such acknowledgements are made due to the absence of feedback. If a transmitted codeword is lost, it is lost forever. This decouples the dynamics of different queues, i.e., they no longer interact. 
  \item The adopted model for data arrival and the definition of ``time slot'' in the current paper are different from those in \cite{sim}, \cite{pappas} and \cite{haenggi}. In these works, a whole codeword arrives in a single time slot at each transmitter with a given probability.  Also, a codeword is transmitted in one time slot if the queue is nonempty. In the current paper,  each queue is likely to receive only a fixed number of bits during each time slot. These bits keep accumulating in the buffer until their number is large enough to represent a codeword. 
\end{enumerate} 

The rest of the paper is organized as follows. The system model, problem formulation and a statement of results are presented in Section~II. The block transmission scheme is introduced in Section~III. Section~IV studies sufficient conditions for successful communication. Section~V is devoted to computing the probability of outage under the block transmission scheme. Section VI explores the general bounds of Theorem~1 in the case of a Gaussian~IC. Finally, Section~VII concludes the paper.

\begin{figure}[t]
  \centering
  \includegraphics[scale=1.1] {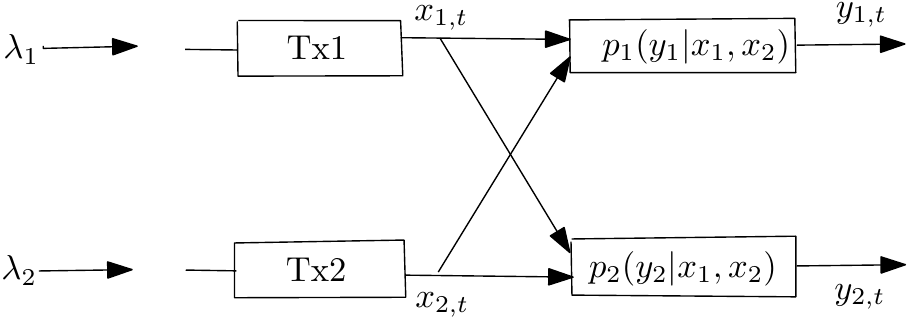}
  \caption{This figure shows a two-user discrete memoryless IC with gradual data arrival in the absence of feedback. In this paper, we study the scenario where the rates of data arrival at the transmitters are identical, i.e., $\lambda_1=\lambda_2=\lambda$.   }
  \label{fig1}
 \end{figure} 
 \section{System Model and Problem Formulation}
We consider a discrete memoryless interference channel of two separate~Tx-Rx pairs in the absence of feedback shown in Fig.\!~\ref{fig1}. We denote the time slots using the index $t=1,2,\cdots$ and the signals at Tx~$i$ and Rx~$i$ during time slot $t$ by $x_{i,t}\in \mathcal{X}_i$ and $y_{i,t}\in \mathcal{Y}_i$, respectively, where $\mathcal{X}_i$ and $\mathcal{Y}_i$ are the correspoding alphabets. Each transmitter is connected to one information source.  The information source at Tx~$i$ generates an i.i.d. Bernoulli process with parameter $\frac{1}{2}$, i.e., each generated bit is a $1$ with a probability of $\frac{1}{2}$ and a $0$ with a probability of $\frac{1}{2}$. The bits arrive gradually. During each time slot, a single bit arrives at Tx~$i$ with a probability of $\lambda_i$ or no bit arrives with a probability of $1-\lambda_i$.
 The bit streams at the two transmitters are independent processes. Each information source turns off after generating $n$ bits, i.e., the communication load per transmitter is $n$ bits. 
 The two data streams are asynchronous. The activation time slot for the data stream at Tx~$i$ is $ \lceil \boldsymbol{d}_in\rceil$ where $\boldsymbol{d}_1, \boldsymbol{d}_2$ are independent random variables uniformly distributed on the interval $[0,D]$ for given $D>0$. Throughout the paper, random variables appear in bold such as $\boldsymbol{x}$ with realization $x$.  
 
  Let $(b_{i,1},b_{i,2},\cdots, b_{i,n})$ be the sequence of $n$ bits generated by the source at Tx~$i$. An encoder at Tx~$i$ is a mapping $f_{i,n}: \{0,1\}^n\to \mathcal{X}_i^{T_{i,n}}$ that maps $(b_{i,1},b_{i,2},\cdots, b_{i,n})$ into a number of $T_{i,n}$ signals $x_{i,t}$ for $t\in \mathcal{T}_i$ where $\mathcal{T}_i$ is the activity period of length $T_{i,n}$ for Tx~$i$    given by 
  \begin{eqnarray}
  \label{scv_111}
\mathcal{T}_i=\{ \lceil d_in\rceil ,\lceil d_in\rceil+1,\cdots, \lceil d_in\rceil+T_{i,n}-1\}.
\end{eqnarray}
  Note that $x_{i,t}$ depends only on those bits that arrive at or before time slot $t$. Moreover, $T_{i,n}$ is a realization of a random variable, because its value depends on how early the bits $b_{i,1},b_{i,2},\cdots, b_{i,n}$ arrive at the encoder. We define the transmission rate at encoder $f_{i,n}$ by 
  \begin{eqnarray}
R_{i,n}:=\frac{n}{T_{i,n}}.
\end{eqnarray}
  A decoder at Rx~$i$ is a mapping $g_{i,n}:\mathcal{Y}_i^{T_{i,n}}\to \{0,1\}^n$ that maps the received signals $y_{i,t}$ for $t\in \mathcal{T}_i$ into a sequence of $n$ bits $(\hat{b}_{i,1},\hat{b}_{i,2},\cdots, \hat{b}_{i,n})$ where $ \hat{b}_{i,j}$ is the estimated value for $b_{i,j}$. Given realizations $d_1, d_2$ for $\boldsymbol{d}_1,\boldsymbol{d}_2$, the bit-error-rate at Rx~$i$ for the encoder-decoder pair $(f_{i,n},g_{i,n})$ is defined by 
  \begin{eqnarray}
  \label{boolk_123}
p^{(b)}_{i,n}(d_1,d_2):=\frac{1}{n}\sum_{j=1}^n\Pr\big(\hat{\boldsymbol{b}}_{i,j}\neq\boldsymbol{b}_{i,j}|\boldsymbol{d}_1=d_1,\boldsymbol{d}_2=d_2\big).
\end{eqnarray}

We say $(\epsilon_1,\epsilon_2)$ is a pair of \textit{achievable outage levels} if there exists a sequence of encoder-decoder pairs $(f_{i,n},g_{i,n})$ with transmission rates $\boldsymbol{R}_{i,n}$ and bit-error-rates $p^{(b)}_{i,n}(\boldsymbol{d}_1,\boldsymbol{d}_2)$ such that 
\begin{enumerate}[(i)]
  \item $\lim_{n\to\infty}\boldsymbol{R}_{i,n}= \lambda_i$ almost surely  for $i=1,2$ and
  \item $\Pr(\mathcal{O}_i) \le\epsilon_i $ for $i=1,2$ where $\mathcal{O}_i$ is the outage event that the bit-error-rate at Rx~$i$ does not converge to zero, i.e.,
\begin{eqnarray}
\label{outm}
\mathcal{O}_i=\big\{\limsup_{n\to\infty}p^{(b)}_{i,n}(\boldsymbol{d}_1,\boldsymbol{d}_2)>0\big\}.
\end{eqnarray}
\end{enumerate}
Let $\mathcal{E}$ be the set of all pairs of achievable outage levels.  The infimum of $\max\{\epsilon_1,\epsilon_2\}$ over all $(\epsilon_1,\epsilon_2)\in \mathcal{E}$ is denoted by $\epsilon(\lambda_1,\lambda_2)$ and referred to as the \textit{worst-case outage level}, i.e., 
\begin{eqnarray}
\epsilon(\lambda_1,\lambda_2)=\inf_{(\epsilon_1,\epsilon_2)\in \mathcal{E}} \max\{\epsilon_1,\epsilon_2\}
\end{eqnarray}
In this paper we focus on a scenario where the incoming data rates are identical, i.e., 
\begin{eqnarray}
\lambda_1=\lambda_2=\lambda.
\end{eqnarray}
 Our contribution is an upper bound (achievability result) on $\epsilon(\lambda,\lambda)$ which we denote by $\epsilon(\lambda)$ for short. 

Next, we present a statement of the main result. Let $\mathrm{Prob}(\mathcal{X}_i)$ be the set of probability distributions on the input alphabet $\mathcal{X}_i$. Fix $x^*_1\in \mathcal{X}_1$, $x^*_2\in \mathcal{X}_2$, $\pi_1\in \mathrm{Prob}(\mathcal{X}_1)$ and $\pi_2\in \mathrm{Prob}(\mathcal{X}_2)$. Let $\boldsymbol{x}_1$ and $\boldsymbol{x}_2$ be independent random variables with values in $\mathcal{X}_1$ and $\mathcal{X}_2$ whose distributions are $\pi_1$ and $\pi_2$, respectively. Let  $\boldsymbol{y}_1$ and $\boldsymbol{y}_2$ be the corresponding random variables at the outputs of the interference channel. For $i\in \{1,2\}$ denote $3-i$ by $i'$ and define
\begin{eqnarray}
\label{ino_1}
C_{i}^*:=I(\boldsymbol{x}_i;\boldsymbol{y}_i|\boldsymbol{x}_{i'}=x^*_{i'}),
\end{eqnarray}
\begin{eqnarray}
C_i:=I(\boldsymbol{x}_i;\boldsymbol{y}_i)
\end{eqnarray}
and
\begin{eqnarray}
\label{ciip}
C_{i,i'}:=I(\boldsymbol{x}_i;\boldsymbol{y}_i|\boldsymbol{x}_{i'})
\end{eqnarray}
Also, define
\begin{eqnarray}
\tilde{C}_{i}^*:=I(\boldsymbol{x}_{i'};\boldsymbol{y}_i|\boldsymbol{x}_{i}=x^*_{i})
\end{eqnarray}
and
\begin{eqnarray}
\label{ino_2}
\tilde{C}_i:=I(\boldsymbol{x}_{i'};\boldsymbol{y}_i).
\end{eqnarray}
A standard application of data processing inequality shows that if 
\begin{eqnarray}
\label{l_bar}
\lambda> \overline{\lambda}:=\min_{i=1,2}\max_{\pi_1,\pi_2}C_{i,i'},
\end{eqnarray}
then there is a user for which there exists no coding scheme that guarantees a vanishingly small bit error rate. The proof of this statement is provided in Appendix~A. 
\begin{thm}
\label{thm1}
 Assume $\lambda\le\overline{\lambda}$ where $\overline{\lambda}$ is given in (\ref{l_bar}). Define\footnote{The superscripts TIN and DI represent treating interference as noise and decoding interference, respectively. } 
 \begin{eqnarray}
 \label{l_tin}
 \lambda^{(TIN)}:=\min\{C_1,C_2\}
\end{eqnarray}
and 
\begin{eqnarray}
\label{l_di}
\lambda^{(DI)}:=\min\{C_{1,2}, \tilde{C}_1, C_{2,1}, \tilde{C}_2\}.
\end{eqnarray}
For $r> 1$ define 
 \begin{eqnarray}
\rho^{(TIN)}_i(r):= \frac{\lambda r-C_i}{C_i^*-C_i} 
\end{eqnarray}
and 
\begin{eqnarray}
\rho^{(DI)}_i(r):= \max\Big\{\frac{\lambda r-C_{i,i'}}{C_i^*-C_{i,i'}},\frac{\lambda r-\tilde{C}_{i}}{\tilde{C}_i^*-\tilde{C}_{i}}\Big\}.
\end{eqnarray}
\newpage
\begin{enumerate}
  \item If $\lambda< \lambda^{(TIN)}$ or $\lambda< \lambda^{(DI)}$, then $\epsilon(\lambda)=0$.
  \item If $\lambda\ge\lambda^{(TIN)}$, then 
  \begin{eqnarray}
  \epsilon(\lambda)\leq \epsilon^{(TIN)}(\lambda):=\frac{\kappa}{r_0}\max\Big\{\rho_1^{(TIN)}(r_0), \rho_2^{(TIN)}(r_0)\Big\},
\end{eqnarray}
where $r_0$ is the infimum of all $r>1$ that satisfy $\max\big\{\rho_1^{(TIN)}(r),\rho_2^{(TIN)}(r)\big\}<\min\{1,r-1\}$ and  
 \begin{eqnarray}
 \label{capp}
\kappa:=\left\{\begin{array}{cc}
    \frac{2}{\lambda D}(2-\frac{1}{\lambda D})  &  \lambda D\geq 1  \\
   2   &   \lambda D< 1
\end{array}\right..
\end{eqnarray}
 \item If $\lambda\ge\lambda^{(DI)}$, then 
  \begin{eqnarray}
  \epsilon(\lambda)\leq \epsilon^{(DI)}(\lambda):=\frac{\kappa}{r_0}\max\Big\{\rho_1^{(DI)}(r_0), \rho_2^{(DI)}(r_0)\Big\},
\end{eqnarray}
where $r_0$ is the infimum of all $r>1$ that satisfy  $\max\big\{\rho_1^{(DI)}(r),\rho_2^{(DI)}(r)\big\}<\min\{1,r-1\}$. 
\end{enumerate}
\end{thm}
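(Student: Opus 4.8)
The plan is to establish both achievability bounds by explicitly analyzing the block transmission scheme and then converting a deterministic sufficient condition for reliable decoding into a probability over the asynchrony pair $(\boldsymbol{d}_1,\boldsymbol{d}_2)$. First I would fix a rate-expansion factor $r>1$ and let each transmitter draw, once and for all, a point-to-point random codebook of rate $\lambda r$ over the single-user input law $\pi_i$; the transmitter partitions its $n$ incoming bits into blocks and emits the codeword for a block as soon as enough bits have accumulated in its buffer, sending the idle symbol $x^*_i$ in the interim. The first task is to verify achievability condition (i): since the arrivals form a Bernoulli($\lambda$) process, the strong law of large numbers gives that the time to accumulate $n$ bits is $n/\lambda+o(n)$ almost surely, and since the codewords occupy a $1/r$ fraction of that window the total activity length satisfies $T_{i,n}=n/\lambda+o(n)$, so $\boldsymbol{R}_{i,n}=n/T_{i,n}\to\lambda$ almost surely.

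Next I would describe the interference geometry induced by $(d_1,d_2)$. With $T_{i,n}\approx n/\lambda$, the two activity windows $[\,d_in,\,d_in+n/\lambda\,]$ overlap in a segment of normalized length $(1/\lambda-|d_1-d_2|)^+$, and only inside this overlap can one transmitter interfere with the other. Each transmitted codeword of user $i$ is therefore received over a time-varying channel whose symbols split into an interference-free part (the other user silent, so the conditional law is the one defining $C_i^*$) and an interfered part (the other user active, governed by $C_i$ under TIN). The heart of the argument (Section~IV) is the claim that a single interference-oblivious random code still attains the time-averaged mutual information: if the interference-free fraction $\alpha_i$ of user $i$'s symbols satisfies $\alpha_i C_i^*+(1-\alpha_i)C_i\ge\lambda r$, equivalently $\alpha_i\ge\rho_i^{(TIN)}(r)$, then joint-typicality decoding drives the block and hence bit error probability to zero. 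For the DI branch the receiver additionally decodes the interferer's codeword, and splitting into the ``decode own message given the interferer'' constraint (thresholded by $C_{i,i'}$ against $C_i^*$) and the ``decode the interferer'' constraint (thresholded by $\tilde{C}_i$ against $\tilde{C}_i^*$) produces the two ratios whose maximum is $\rho_i^{(DI)}(r)$. I expect this step to be the main obstacle, because the codebook is fixed before $(d_1,d_2)$ is realized and neither terminal knows the asynchrony, so the point-to-point coding theorem must be re-proved for a channel whose state sequence (interfered versus free) is placed adversarially and only its empirical fraction is controlled; the feasibility constraint $\rho_i(r)<\min\{1,r-1\}$ enters here, encoding both $\lambda r<C_i^*$ and that the $1/r$ duty cycle leaves room to realize the required interference-free fraction.

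With the sufficient condition in hand, Part~1 is immediate: if $\lambda<\lambda^{(TIN)}=\min\{C_1,C_2\}$ I choose $r$ close enough to $1$ that $\lambda r<\min\{C_1,C_2\}$, whence $\rho_i^{(TIN)}(r)<0$ and \emph{every} codeword decodes irrespective of the overlap, so $\Pr(\mathcal{O}_i)=0$ and $\epsilon(\lambda)=0$; the identical argument with $\lambda^{(DI)}$ and the four quantities $C_{1,2},\tilde{C}_1,C_{2,1},\tilde{C}_2$ settles the DI case.

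Finally (Section~V) I would compute the outage probability. Translating $\alpha_i\ge\rho_i(r)$ through the overlap geometry yields a threshold of the form $|d_1-d_2|\ge\theta_i(r)$, so that $\Pr(\mathcal{O}_i)\le\Pr(|\boldsymbol{d}_1-\boldsymbol{d}_2|<\theta_i(r))$. Since $\boldsymbol{d}_1,\boldsymbol{d}_2$ are i.i.d.\ uniform on $[0,D]$, their difference has triangular density $2(D-u)/D^2$, giving $\Pr(|\boldsymbol{d}_1-\boldsymbol{d}_2|<a)=\frac{a}{D}\bigl(2-\frac{a}{D}\bigr)$ for $a\le D$ and $1$ otherwise. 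Bounding this expression and taking $r=r_0$, the infimum of the admissible $r$ (those with $\max_i\rho_i(r)<\min\{1,r-1\}$), collapses the two regimes $\lambda D\ge1$ and $\lambda D<1$ into the single constant $\kappa$ and yields $\epsilon(\lambda)\le\frac{\kappa}{r_0}\max\{\rho_1(r_0),\rho_2(r_0)\}$ in both the TIN and DI forms. The remaining routine work is to confirm that the $o(n)$ fluctuations in $T_{i,n}$ and the partially-overlapped boundary blocks contribute vanishing extra error and so do not affect the limiting outage event.
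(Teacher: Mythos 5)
Your overall architecture matches the paper's: a block transmission scheme with point-to-point random codes of rate $\lambda r$, a per-codeword sufficient condition of the form ``interference-free fraction $\ge \rho_i(r)$'' (which is exactly the paper's $\mu_j<1-\rho_i(r)$), the two constraint ratios for the DI decoder, and Part~1 via choosing $r$ close to $1$ so that $\rho_i(r)<0$. Those pieces are sound.

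However, the final step contains a genuine error. The success condition does \emph{not} reduce to a threshold of the form $|d_1-d_2|\ge\theta_i(r)$. Because $r>1$, each transmitter emits intermittent bursts of normalized length $1$ separated by gaps of length $r-1$, so the interference-free fraction of any given codeword is governed by the fine alignment of the two burst trains --- effectively by $\delta$ modulo the burst period $r\theta_N$ --- and not by the gross overlap $(1/\lambda-\delta)^+$ of the activity windows. The set of good delays is a union of $N$ admissible windows, $\bigcup_{j}\theta_N\big((j-1)r+\rho_i(r),\,jr-\rho_i(r)\big)$, one per burst period; the bound $\kappa\,\rho_i(r)/r$ is the limiting $F_{\boldsymbol{\delta}}$-measure of the \emph{complement} of this union as $N\to\infty$, since the bad set occupies a fraction $2\rho_i(r)/r$ of each period. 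Collapsing this union to a half-line and computing $\Pr(|\boldsymbol{d}_1-\boldsymbol{d}_2|<\theta_i(r))$ under the triangular law cannot produce the factor $\rho_i(r_0)/r_0$; it instead reproduces the strictly weaker, order-$\kappa/2$ bound that the paper derives in Appendix~B for the gapless regime $r\le 1$. Relatedly, your claim $T_{i,n}=n/\lambda+o(n)$ silently requires the number of blocks $N$ to grow with $n$ (with $N$ fixed the rate is $\frac{Nr}{Nr+1}\lambda<\lambda$, violating condition (i)), and the same limit $N\to\infty$ is what makes the admissible windows equidistribute so that the outage probability converges to $\kappa\rho_i(r)/r$; neither limit appears in your sketch, and without the periodic-interval geometry the stated bound does not follow.
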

\begin{proof}
The proof builds upon Sections III, IV and V. 
\end{proof}

\section{A Block Transmission Scheme}
To transmit its data, each transmitter employs a codebook consisting of $2^{ n\theta R_c }$ codewords of length~$n\theta$
 where $\theta>0$ is a parameter to be determined shortly and $R_{c}$ is the \textit{code~rate}.\footnote{To be accurate, we need to write that the codebook consists of $2^{\lfloor n\theta R_c \rfloor}$ codewords of length~$\lfloor n\theta\rfloor$. For notational simplicity, we have dropped the floor symbol $\lfloor\cdot\rfloor$.} 
 Upon transmission, a codeword is sent over the channel in~$n\theta$ consecutive time~slots. For simplicity of presentation, let us temporarily assume Tx~$i$ starts its activity at time slot $t=1$.  Let $\boldsymbol{s}_{i,l}$ be the number of bits entering the buffer of Tx~$i$ during time slot $l$ and $\boldsymbol{\tau}_{i,1}$ be the smallest index $t\geq 1$ such that $\sum_{l=1}^t\boldsymbol{s}_{i,l}\ge n\theta R_c $, i.e. 
 \begin{equation}
\label{ }
\boldsymbol{\tau}_{i,1}:=\min\Big\{t\geq 1: \sum_{l=1}^t\boldsymbol{s}_{i,l}\ge  n\theta R_c \Big\}.
\end{equation}
At time slot $t=\boldsymbol{\tau}_{i,1}+1$, a number of $n\theta R_{c}$ bits in the buffer of Tx~$i$ are represented by a codeword which is transmitted over the channel during time slots of indices $\boldsymbol{\tau}_{i,1}+1,\cdots, \boldsymbol{\tau}_{i,1}+n\theta$. Let the total number of transmitted codewords per user be $N$. Since each transmitter only sends a total number of $n$ bits, we need $N\times n\theta R_c=n$, i.e., 
\begin{eqnarray}
\theta=\frac{1}{NR_c}.
\end{eqnarray}
We adopt the notation $\theta=\theta_{N}$ hereafter to make the dependence of $\theta$ on $N$ explicit.   For every~$2\le j\leq N$, define
\begin{equation}
\label{buffer1}
\boldsymbol{\tau}_{i,j}:=\min\Big\{t\geq \boldsymbol{\tau}_{i,j-1}+ n\theta_{N}: \sum_{l=1}^t\boldsymbol{s}_{i,l}\ge j n\theta_{N} R_{c}\Big\}.
\end{equation}
 At time slot $\boldsymbol{\tau}_{i,j}$, the $j^{th}$ group of $n\theta_{N} R_{c}$ bits in the buffer of Tx~$i$ are represented by a codeword which is transmitted over the channel during time slots of indices $\boldsymbol{\tau}_{i,j}+1,\cdots, \boldsymbol{\tau}_{i,j}+ n\theta_{N}$. 
We assume $\boldsymbol{x}_{i,t}=x_i^*$ for every time slot $t$ when Tx~$i$ is idle, i.e., it does not transmit a symbol of a codeword. Here, $x_i^*\in \mathcal{X}_i$ is a fixed member of $\mathcal{X}_i$.  

Define  the random time
\begin{eqnarray}
\label{inu_1}
\boldsymbol{\xi}_{i,j}:=\min\Big\{t\geq 1: \sum_{l=1}^t\boldsymbol{s}_{i,l}\ge j n\theta_{N} R_{c}\Big\}.
\end{eqnarray}
Then $\boldsymbol{\tau}_{i,j}$  in (\ref{buffer1}) satisfies the recursion
\begin{eqnarray}
\label{tweet1}
\begin{array}{c}
      \boldsymbol{\tau}_{i,1}=\boldsymbol{\xi}_{i,1}  \\
         \boldsymbol{\tau}_{i,j}=\max\big\{\boldsymbol{\tau}_{i,j-1}+n\theta_{N}, \boldsymbol{\xi}_{i,j}\big\},\,\,\,j=2,3,\cdots, N
\end{array}.
\end{eqnarray}
We know\footnote{Recall that $\boldsymbol{s}_{i,l}\in\{0,1\}$ is the \textit{number} of bits that enter the buffer of Tx~$i$ during time slot $l$.  } $\boldsymbol{s}_{i,l}$ for $l=1,2,\cdots$ is an i.i.d. Bernoulli process with $\Pr(\boldsymbol{s}_{i,l}=1)=\lambda$ and $\Pr(\boldsymbol{s}_{i,l}=0)=1-\lambda$. As such, $\boldsymbol{\xi}_{i,j}$ is a negative binomial random variable with parameters $ jn\theta_{N} R_c$ and $\lambda$, i.e.,   
\begin{eqnarray}
\label{lln1}
\boldsymbol{\xi}_{i,j}\sim \mathrm{NB}( jn\theta_{N} R_c,\lambda).
\end{eqnarray}
Interpreting $\boldsymbol{\xi}_{i,j}$ as a sum of $ jn\theta_{N} R_c$ independent geometric random variables with parameter $\lambda$, one can invoke the strong law of large numbers to conclude that $\lim_{n\to\infty}\frac{\boldsymbol{\xi}_{i,j}}{jn\theta_{N} R_c}=\frac{1}{\lambda}$. Then
\begin{eqnarray}
\label{powl44}
\lim_{n\to\infty}\frac{\boldsymbol{\xi}_{i,j}}{n\theta_{N}}=\frac{j R_c}{\lambda}=j r,
\end{eqnarray} 
where we have defined the \textit{normalized code rate} $ r$  by   
\begin{eqnarray}
 r:=\frac{R_c}{\lambda}.
\end{eqnarray}
By (\ref{tweet1}) and (\ref{powl44}) and using induction on $j$, we find that the limit 
\begin{eqnarray}
\label{powl55}
\overline{\tau}_{j}:=\lim_{n\to\infty}\frac{\boldsymbol{\tau}_{i,j}}{n\theta_{N}}
\end{eqnarray}
exists for every $j\geq 1$ and satisfies the recursion 
\begin{eqnarray}
\label{tweet3}
\begin{array}{c}
      \overline{\tau}_{1}= r  \\
         \overline{\tau}_{j}=\max\big\{\overline{\tau}_{j-1}+1, j r\big\},\,\,\,j=2,3,\cdots,N
\end{array}.
\end{eqnarray}
Using induction one more time, we get 
\begin{eqnarray}
\label{shei_1}
\overline{\tau}_{j}=\left\{\begin{array}{cc}
    j r  &  r> 1   \\
    r+j-1   &   0< r\le1
\end{array}\right.,
\end{eqnarray}
for every $1\le j\le N$. It is convenient to define the scaled time-axis $\overline{t}=\frac{t}{n\theta_{N}}$ which we refer to as the $\overline{t}$-axis.  For $i=1,2$ and $1\le j\le N$ define the activity intervals \begin{eqnarray}
\label{act_int_as}
\mathcal{I}_{i,j}:=\big(d_i/\theta_{N}+\overline{\tau}_{j},d_i/\theta_{N}+\overline{\tau}_{j}+1\big).
\end{eqnarray}
The following proposition summarizes the observations made in above: 
\begin{proposition}
\label{prop_2}
For every $1\le j\le N$, the $j^{th}$ codeword of Tx~$i$ lies almost surely over $\mathcal{I}_{i,j}$ on the $\overline{t}$-axis.
\end{proposition}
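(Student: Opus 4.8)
The plan is to combine the almost-sure limit already recorded in (\ref{powl55}) with the deterministic time shift induced by the true activation slot $\lceil d_in\rceil$, and then to track the two endpoints of the $j^{th}$ codeword's support on the $\overline{t}$-axis. I read the statement ``lies almost surely over $\mathcal{I}_{i,j}$'' as the assertion that, after scaling by $n\theta_N$, the support of the $j^{th}$ codeword converges almost surely to the open interval $\mathcal{I}_{i,j}$ of (\ref{act_int_as}). The task therefore reduces to showing that its scaled left and right endpoints converge to $d_i/\theta_N+\overline{\tau}_j$ and $d_i/\theta_N+\overline{\tau}_j+1$, respectively.

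First I would restore the true activation time. Under the temporary assumption that Tx~$i$ begins at $t=1$, the $j^{th}$ codeword occupies slots $\boldsymbol{\tau}_{i,j}+1,\ldots,\boldsymbol{\tau}_{i,j}+n\theta_N$. Shifting every index by $\lceil d_in\rceil-1$ to account for the actual activation at $\lceil d_in\rceil$, the $j^{th}$ codeword instead occupies slots $\lceil d_in\rceil+\boldsymbol{\tau}_{i,j},\ldots,\lceil d_in\rceil+\boldsymbol{\tau}_{i,j}+n\theta_N-1$. Dividing by $n\theta_N$ places these on the $\overline{t}$-axis, with scaled left endpoint $\frac{\lceil d_in\rceil+\boldsymbol{\tau}_{i,j}}{n\theta_N}$ and scaled right endpoint equal to this plus $1-\frac{1}{n\theta_N}$.

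Next I would pass to the limit $n\to\infty$. Since $\lceil d_in\rceil/n\to d_i$ while $\theta_N$ does not depend on $n$, the deterministic term satisfies $\lceil d_in\rceil/(n\theta_N)\to d_i/\theta_N$, whereas the random term satisfies $\boldsymbol{\tau}_{i,j}/(n\theta_N)\to\overline{\tau}_j$ almost surely by (\ref{powl55}); the vanishing correction $1/(n\theta_N)\to 0$ is harmless. Hence the scaled left endpoint tends to $d_i/\theta_N+\overline{\tau}_j$ and the scaled right endpoint to $d_i/\theta_N+\overline{\tau}_j+1$ almost surely, which is precisely the claim that the scaled support fills $\mathcal{I}_{i,j}$.

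The only ingredient requiring genuine work is the almost-sure limit (\ref{powl55}) itself, and that is where I expect the main obstacle to sit. To justify it I would first apply the strong law of large numbers to the representation of $\boldsymbol{\xi}_{i,j}$ as a sum of $jn\theta_N R_c$ i.i.d.\ geometric variables, yielding the almost-sure limit $\boldsymbol{\xi}_{i,j}/(n\theta_N)\to jr$ of (\ref{powl44}); I would then push this through the recursion (\ref{tweet1}) by induction on $j$, using continuity of the map $(a,b)\mapsto\max\{a+1,b\}$ to pass the limit through each step and reach (\ref{tweet3}), whose closed form is (\ref{shei_1}). The activation shift contributes nothing random and is the straightforward part; the care is all in organizing the almost-sure convergence over the finitely many indices $1\le j\le N$ so that the exceptional null sets can be combined into a single one.
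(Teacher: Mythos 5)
Your proposal is correct and follows essentially the same route as the paper: the paper's justification of Proposition~\ref{prop_2} is precisely the chain of observations preceding it --- the SLLN applied to the negative binomial representation of $\boldsymbol{\xi}_{i,j}$ to get (\ref{powl44}), induction through the recursion (\ref{tweet1}) to obtain the almost-sure limit (\ref{powl55}) with closed form (\ref{shei_1}), and the deterministic shift by $\lceil d_i n\rceil$ scaled by $n\theta_N$ to land on the intervals (\ref{act_int_as}). Your write-up merely makes explicit the endpoint bookkeeping and the union of the finitely many null sets, which the paper leaves implicit.
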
 
\textit{Remark~1}- We see that if $ r\le1$, then $\overline{\tau}_{j+1}-\overline{\tau}_{j}=1$ for every $1\le j\le N-1$ and hence, there is no gap between consecutively transmitted codewords along the $\overline{t}$-axis. If $ r>1$, there always exists a nonzero gap between every two codewords, i.e.,  the signals sent by each transmitter look like intermittent~bursts.~$\Diamond$ 

\textit{Remark~2}- In practice both transmitters insert a preamble sequence at the beginning of each codeword. The receivers then apply standard detection techniques such as sequential joint-typicality detection~\cite{Chandar11} in order to find the exact arrival time of the codewords. The length of preamble sequences is of the order of $o(n)$, say logarithmic in $n$, and hence, the users do not incur any loss in spectral efficiency as $n$ tends to infinity.~$\Diamond$  

We end this section by finding an expression for the average transmission rate per user. Tx~$i$ sends a total number of $n$ bits over a period of $\boldsymbol{T}_{i,n}=\boldsymbol{\tau}_{i,N}+n\theta_{N}$ time slots. Then the average transmission rate for Tx~$i$ is given by the random variable 
\begin{eqnarray}
\boldsymbol{R}_{i,n}=\frac{n}{\boldsymbol{T}_{i,n}}=\frac{n}{\boldsymbol{\tau}_{i,N}+n\theta_{N}}.
\end{eqnarray}
As $n$ grows to infinity, $\boldsymbol{R}_{i,n}$ converges to the constant 
\begin{eqnarray}
R(N,r):=\lim_{n\to\infty}\boldsymbol{R}_{i,n}=\lim_{n\to\infty}\frac{n}{n\theta_N(\frac{\boldsymbol{\tau}_{i,N}}{n\theta_N}+1)}=\frac{Nr}{\overline{\tau}_{N}+1}\lambda.
\end{eqnarray} 
 Replacing $\overline{\tau}_{N}$ by its value given in~(\ref{shei_1}), we get 
\begin{eqnarray}
\label{yar_11}
R(N,r)=\left\{\begin{array}{cc}
    \frac{Nr}{Nr+1}\lambda  & r>1   \\
   \frac{Nr}{N+r}\lambda   &   0<r\le1
\end{array}\right..
\end{eqnarray} 
Condition~(i) in the definition of a pair of achievable outage levels requires the average transmission rate be $\lambda$. Considering that $r$ can not grow arbitrarily large due to the limited capacity of the interference channel,  there are two scenarios that guarantee $R(N,r)$ approaches $\lambda$:
\begin{enumerate}
  \item Let $r\geq1$. Then $\lim_{N\to\infty}R(N,r)=\lambda$ regardless of $r$.
  \item Let $r<1$. Then $\lim_{r\to1^-}\lim_{N\to\infty}R(N,r)=\lambda$.
\end{enumerate}
Interestingly, the first scenario with $r>1$ achieves smaller outage probabilities compared to the second scenario. Appendix~B presents the outage probabilities achieved under the second scenario. The reader is encouraged to compare the observations made in Appendix B with the results we derive in Section~IV. Our next goal is to compute the limit $\inf_{r> 1 }\lim_{N\to\infty}\max\{\Pr(\mathcal{O}_1),\Pr(\mathcal{O}_2)\}$ under the block transmission scheme.

\section{Conditions for successful communication when $r>1$ }
   In this section we obtain sufficient conditions such that the transmitted codewords are decoded successfully at their respective receivers when $r> 1$. We assume Tx~$i$ employs random codes generated according to $\pi_i\in \mathrm{Prob}(\mathcal{X}_i)$. Two decoding strategies are examined, namely, treating interference as noise and decoding interference. 
   

 Let
 \begin{eqnarray}
\label{diff_delay}
\delta:=|d_1-d_2|
\end{eqnarray}
and
 \begin{eqnarray}
 \label{lola}
\rho_i(r):=\left\{\begin{array}{cc}
  \frac{\lambda r-C_i}{C_i^*-C_i}  &  \textrm{Rx~$i$ treats interference as noise}  \\
  \max\Big\{\frac{\lambda r-C_{i,i'}}{C_i^*-C_{i,i'}},\frac{\lambda r-\tilde{C}_{i}}{\tilde{C}_i^*-\tilde{C}_{i}}\Big\}  & \textrm{Rx~$i$ decodes interference}  
\end{array}\right..
\end{eqnarray}
Depending on the decoder, we denote $\rho_i(r)$ by $\rho^{(TIN)}_i(r)$ or $\rho^{(DI)}_i(r)$.
Define the \textit{admissible intervals} $\mathcal{A}_{i,j}$ by  
  \begin{eqnarray}
  \label{khia_11}
\mathcal{A}_{i,j}:=\big((j-1) r+\rho_i(r),j r-\rho_i(r)\big),\,\,\,\,\,\,\,\,\,\,\,1\leq j\leq N-1
\end{eqnarray}
and 
 \begin{eqnarray}
\mathcal{A}_{i,N}:=\big((N-1) r+\rho_i(r),\infty\big).
\end{eqnarray}

\begin{proposition}
\label{prop_3}
  All $N$ transmitted codewords intended for Rx~$i$ are decoded successfully at this receiver if either the two conditions\footnote{For an interval $I=(a,b)$ and $c>0$, we denote $(ca,cb)$ by $cI$.}
  \begin{eqnarray}
  \label{dam_11}
 \rho_i(r)<\min\{1,r-1\},\,\,\,\,\,\,\,\delta\in \bigcup_{j=1}^{N-1}\theta_N\mathcal{A}_{i,j}
\end{eqnarray}
hold or the two conditions
 \begin{eqnarray}
\rho_i(r)<1,\,\,\,\,\,\,\,\delta\in \theta_N\mathcal{A}_{i,N}
\end{eqnarray}
hold.
\end{proposition}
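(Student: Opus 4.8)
The plan is to reduce the claim to two independent ingredients: a coding-theoretic fact that converts ``successful decoding'' into a lower bound on the fraction of interference-free symbols in each codeword, and a purely geometric computation that shows the hypotheses on $\delta$ force exactly that fraction bound. Throughout I work on the $\overline{t}$-axis and use Proposition~\ref{prop_2}, which places codeword $j$ of Tx~$i$ almost surely on the unit interval $\mathcal{I}_{i,j}=(d_i/\theta_N+jr,\,d_i/\theta_N+jr+1)$ (recall $\overline{\tau}_j=jr$ for $r>1$), and likewise places the interferer's codewords on the $\mathcal{I}_{i',k}$.

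First I would set up the coding reduction. Fix $r>1$ and consider one codeword of Tx~$i$, a length-$1$ interval on the $\overline{t}$-axis. Each of its symbols either sees Tx~$i'$ idle (transmitting $x^*_{i'}$) or active (transmitting a random $\pi_{i'}$-symbol), so the codeword is sent over a two-state memoryless channel whose states are known to the receiver via the preambles of Remark~2. If a fraction $\alpha$ of symbols are clean, then under treating-interference-as-noise the states support mutual informations $C_i^*$ and $C_i$, and a random code of rate $R_c=\lambda r$ is reliably decodable as $n\to\infty$ precisely when $\lambda r\le \alpha C_i^*+(1-\alpha)C_i$, i.e. when $\alpha\ge \rho^{(TIN)}_i(r)$; this is how $\rho_i$ was defined. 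Under decode-interference the receiver jointly decodes its own message and the overlapping interfering codeword, producing two rate constraints: one on the desired message (clean/interfered rates $C_i^*,C_{i,i'}$) and one on the interfering message as seen at Rx~$i$ (clean/interfered rates $\tilde{C}_i^*,\tilde{C}_i$). These hold exactly when the clean fraction is at least $\rho^{(DI)}_i(r)$, the maximum of the two thresholds. Hence in either mode it suffices to show every codeword of Tx~$i$ has clean fraction at least $\rho_i(r)$.

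Second I would carry out the overlap computation. Since both codeword patterns advance by $r$ per index, every signal codeword has the same phase relative to the interferer grid, governed by $w:=(\delta/\theta_N)\bmod r$ (using $\delta=|d_1-d_2|$, and noting that swapping the two users negates the shift and hence sends $w\mapsto r-w$). A short case analysis, split according to whether $r<2$ or $r\ge 2$ because a unit window can meet one or two interferer intervals, yields a piecewise-linear interfered length $O(w)$ with $O(w)=O(r-w)$, and shows $O(w)\le 1-\rho_i$ exactly on $w\in[\rho_i,\,r-\rho_i]$, a nonempty interval iff $\rho_i<\min\{1,r-1\}$. Unwrapping the modulus, $w\in(\rho_i,r-\rho_i)$ is equivalent to $\delta/\theta_N\in\bigcup_j\mathcal{A}_{i,j}$, i.e. $\delta\in\bigcup_{j=1}^{N-1}\theta_N\mathcal{A}_{i,j}$, the index running only to $N-1$ because the interferer sequence is finite. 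This gives the first pair of conditions; the symmetry $O(w)=O(r-w)$ is what guarantees that in the decode-interference case the interfering codewords are themselves clean on a fraction $\ge\rho_i$ at Rx~$i$, so both thresholds in $\rho^{(DI)}_i$ are met simultaneously. The regime $\delta\in\theta_N\mathcal{A}_{i,N}$, i.e. $\delta/\theta_N>(N-1)r+\rho_i$, is the boundary case: here the two finite sequences are so far apart that only the extreme signal codeword meets the extreme interferer codeword, and from one side only; bounding that single overlap shows it is $<1-\rho_i$ as soon as $\rho_i<1$, with no $r-1$ margin needed since no two-sided overlap occurs.

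I expect the main obstacle to be the geometric bookkeeping together with two finiteness effects. The first is the edge codewords of each user, where the overlap is one-sided and must be checked not to exceed the interior bound $1-\rho_i$ (routine but easy to get wrong in the $1<r<2$ sub-case, where a unit interferer interval can still touch two signal codewords). The second, more delicate point is that Proposition~\ref{prop_2} supplies the codeword positions only in the $n\to\infty$ limit, so I must argue that the \emph{strict} inequalities $\rho_i<\min\{1,r-1\}$ (respectively $\rho_i<1$) leave enough slack for the finite-$n$ clean fractions to stay above $\rho_i$ and for the random-coding error probability to vanish; the strictness in the hypotheses is exactly the room that makes this limiting argument go through.
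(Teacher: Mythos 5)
Your proposal is correct and follows essentially the same route as the paper's Appendix~C: reduce decodability to the requirement that the interference-free fraction of each codeword exceed $\rho_i(r)$ (equivalently $R_c<(1-\mu)C_i^*+\mu C_i$ and its decode-interference analogues), then translate this into interval conditions on $\delta/\theta_N$, with the last codeword and the no-overlap regime handled separately under the weaker condition $\rho_i(r)<1$. Your phase variable $w=(\delta/\theta_N)\bmod r$ and the symmetry $O(w)=O(r-w)$ merely repackage the paper's four-case analysis of how $\mathcal{I}_{2,1}$ meets $\mathcal{I}_{1,j}$ and $\mathcal{I}_{1,j+1}$ into a single piecewise-linear overlap function yielding the same admissible intervals $\mathcal{A}_{i,j}$.
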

\begin{proof}
The criterion for successful communication is $R_c<I$ where $R_c=r\lambda$ is the code rate and $I$ is the ``effective'' mutual information from a transmitter to a receiver. Note that $I$ depends on the relative positions of activity intervals for the two transmitters given in (\ref{act_int_as}) which in turn depend on $R_c$ itself. The details are provided in Appendix~C. 
\end{proof}  
 If there exists $r>1$ such that $\rho_i(r)<0$, then the conditions in Proposition~\ref{prop_3} hold regardless of the value of $\delta\ge0$.  In this case, $\Pr(\mathcal{O}_i)=0$. Throughout the rest of the paper, we assume regardless of $r> 1$, either $\rho_1(r)\geq 0$ or $\rho_2(r)\ge0$. This is easily seen to be equivalent to 
\begin{eqnarray}
\lambda\ge \lambda^{(TIN)}:=\min\{C_1,C_2\}
\end{eqnarray}
for receivers that treat interference as noise and to 
\begin{eqnarray}
\lambda\geq\lambda^{(DI)}:=\min\{C_{1,2}, \tilde{C}_1, C_{2,1}, \tilde{C}_2\}
\end{eqnarray}
for receivers that decode interference. The first case in Theorem~1 is already verified. During Section~V, we will verify the second and third cases.  

\textit{Remark 3}- It may happen that $C_i^*=C_{i,i'}$. For example, additive channels satisfy this condition.  Then the term $\frac{\lambda r-C_{i,i'}}{C_i^*-C_{i,i'}}$ in (\ref{lola}) does not make sense. In this case, we need to redefine $\rho^{(DI)}_i(r)=\frac{\lambda r-\tilde{C}_{i}}{\tilde{C}_i^*-\tilde{C}_{i}}$ and replace the first inequality in (\ref{dam_11}) by the two inequalities $ \rho^{(DI)}_i(r)=\frac{\lambda r-\tilde{C}_{i}}{\tilde{C}_i^*-\tilde{C}_{i}}<\min\{1,r-1\}$ and $r<\frac{C_i^*}{\lambda}$.~$\Diamond$ 
\section{The outage probability under the block transmission scheme}
   In practice the transmitters do not know each other's activation times $\lceil nd_1\rceil$ and $\lceil nd_2\rceil$ and hence, the value of~$\delta$~in~(\ref{diff_delay}) is unknown to both transmitters. In this section, we treat $\delta$ as a random variable  and compute the probability of communication failure (outage) under the block transmission scheme. We know $\boldsymbol{d}_1$ and $\boldsymbol{d}_2$ are independent uniform random variables  over the interval $[0,D]$ for some $D>0$. Then it is easy to see that the cumulative distribution function~(CDF) of the Tx-Tx asynchrony $\boldsymbol{\delta}=|\boldsymbol{d}_2-\boldsymbol{d}_1|$ is given by    
\begin{eqnarray}
\label{cdf_11}
F_{\boldsymbol{\delta}}(\delta)=\left\{\begin{array}{cc}
    0  &   \delta<0 \\
   \frac{ \delta}{D}\big(2-\frac{\delta}{D}\big)  &   0\leq \delta<D\\
    1 & \delta\geq D
\end{array}\right..
\end{eqnarray}
Let $i\in\{1,2\}$ and $\rho_i(r)\ge0$. We will use (\ref{cdf_11}) in order to compute an upper bound on $\Pr(\mathcal{O}_i)$ where $\mathcal{O}_i$ is the outage event that at least one of the $2N$ transmitted packets is not decoded successfully at Rx~$i$.  By Proposition~\ref{prop_3}, 
\begin{eqnarray}
 \Big\{\boldsymbol{\delta} \in \bigcup_{j=1}^{N-1}\theta_N\mathcal{A}_{i,j}\,\,\,\textrm{and}\,\,\,\rho_i(r)<\min\{1,r-1\}\Big\}\bigcup  \Big\{\boldsymbol{\delta} \in \theta_N\mathcal{A}_{i,N}\,\,\,\textrm{and}\,\,\,\rho_i(r)<1\Big\}\subseteq \mathcal{O}^c_{i},
\end{eqnarray}
where $\mathcal{O}_i^c$ is the complement of $\mathcal{O}_i$. Then we obtain
\begin{eqnarray}
\Pr(\mathcal{O}_i)\leq p^{(ub)}_{outage}(i):=1-\Pr\Big(\boldsymbol{\delta} \in \bigcup_{j=1}^{N-1}\theta_N\mathcal{A}_{i,j}\Big)\chi_{1}(i)-\Pr(\boldsymbol{\delta} \in \theta_N\mathcal{A}_{i,N})\chi_{2}(i),
\end{eqnarray}
where the indicator variables $\chi_1(i)$ and $\chi_2(i)$ are given by
\begin{eqnarray}
\chi_{1}(i)=\left\{\begin{array}{cc}
   1   &  \rho_i(r)<\min\{1,r-1\}  \\
    0  &   \mathrm{otherwise}
\end{array}\right.,\hskip1cm\chi_{2}(i)=\left\{\begin{array}{cc}
   1   &  \rho_i(r)<1  \\
    0  &   \mathrm{otherwise}
\end{array}\right.
\end{eqnarray}
We are interested in computing $\lim_{N\to\infty}p^{(ub)}_{outage}(i)$. As $N$ increases, the number of intervals in the union $ \bigcup_{j=1}^{N-1}\theta_N\mathcal{A}_{i,j}$ increases, however, the individual intervals $\theta_N \mathcal{A}_{i,j}$ shrink.  The next proposition presents closed-form expressions for $p^{(ub)}_{outage}(i)$ in terms of $N$.
\newpage
\begin{proposition}
\label{prop_7}
Assume $\rho_i(r)>0$. Define
\begin{eqnarray}
\label{alpha_11}
\alpha:=\lambda D
\end{eqnarray} 
\begin{eqnarray}
\label{beta_11}
\beta_i:=\rho_i(r)/r
\end{eqnarray}
and let
\begin{eqnarray}
\label{back_11}
m:=\lceil N\alpha-\beta_{i}\rceil.
\end{eqnarray} 
\begin{itemize}
  \item If $m=0$ or $1\leq m\leq N-1$ is such that $m\le N\alpha+\beta_{i}$, then 
  \begin{eqnarray}
  \label{back_22}
p^{(ub)}_{outage}(i)=1-(1-2\beta_{i})\Big(2- \frac{m}{N\alpha}\Big)\frac{m}{N\alpha}\chi_{1}(i).
\end{eqnarray}
  \item If $1\leq m\leq N-1$ is such that $m>N\alpha+\beta_{i}$, then 
  \begin{eqnarray}
  \label{pris_111}
p^{(ub)}_{outage}(i)=1-\bigg((1-2\beta_{i})\Big(2-\frac{m-1}{N\alpha}\Big)\frac{m-1}{N\alpha}-\Big(1-\frac{m-1+\beta_{i}}{N\alpha}\Big)^2\bigg)\chi_{1}(i).
\end{eqnarray}
  \item If $m\geq N$, then 
   \begin{eqnarray}
   \label{jame_11}
p^{(ub)}_{outage}(i)=1-(1-2\beta_{i})\Big(2-\frac{N-1}{N\alpha}\Big)\frac{N-1}{N\alpha}\chi_{1}(i)-\Big(1-\frac{N-1+\beta_{i}}{N\alpha}\Big)^2\chi_{2}(i).
\end{eqnarray}
\end{itemize}
\end{proposition}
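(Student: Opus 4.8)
The plan is to turn the two probabilities defining $p^{(ub)}_{outage}(i)$ into a purely one-dimensional geometry problem on the scaled asynchrony axis $u:=\delta/D\in[0,1]$, and then integrate the explicit CDF (\ref{cdf_11}) over the admissible intervals. First I would make the intervals concrete: inserting $\theta_N=1/(NR_c)=1/(Nr\lambda)$ into $\theta_N\mathcal{A}_{i,j}$ and dividing by $D$, each bounded admissible interval becomes, in the $u$-variable,
\[
\Big(\tfrac{(j-1)+\beta_i}{N\alpha},\ \tfrac{j-\beta_i}{N\alpha}\Big),\qquad 1\le j\le N-1,
\]
of common length $(1-2\beta_i)/(N\alpha)$ and separated by gaps of length $2\beta_i/(N\alpha)$, while the unbounded interval $\mathcal{A}_{i,N}$ becomes $\big((N-1+\beta_i)/(N\alpha),\infty\big)$. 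This is precisely where $\alpha=\lambda D$ and $\beta_i=\rho_i(r)/r$ enter.

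Next I would exploit the CDF through the function $G(u):=u(2-u)=1-(1-u)^2$, so that the probability $\boldsymbol\delta$ lands in the $j$-th bounded interval, when that interval lies inside $[0,D]$, is the increment $G\big(\tfrac{j-\beta_i}{N\alpha}\big)-G\big(\tfrac{(j-1)+\beta_i}{N\alpha}\big)$. A one-line computation gives this increment as $\tfrac{1-2\beta_i}{N\alpha}\big(2-\tfrac{2j-1}{N\alpha}\big)$, and summing over $j=1,\dots,k$ telescopes by way of $\sum_{j=1}^{k}(2j-1)=k^2$ to the clean form $(1-2\beta_i)\tfrac{k}{N\alpha}\big(2-\tfrac{k}{N\alpha}\big)$. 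This identity is the single workhorse behind all three displayed expressions; everything else is deciding how many intervals to sum and how to truncate the interval that straddles $\delta=D$, for which I would use $\Pr(\boldsymbol\delta>cD)=1-G(c)=(1-c)^2$ (legitimate because $\boldsymbol\delta\le D$ almost surely).

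The core of the argument is the case split, organized by where the point $u=1$ (i.e. $\delta=D$) sits relative to the lattice of intervals and gaps. I would first verify that $m=\lceil N\alpha-\beta_i\rceil$ is exactly the number of admissible intervals whose \emph{left} endpoint precedes $\delta=D$, so that intervals indexed beyond $m$ start at or past $D$ and contribute nothing; in particular, in Cases 1 and 2 the unbounded interval $\mathcal{A}_{i,N}$ lies past $D$, which is why its probability vanishes and the $\chi_2$ term disappears. The three cases then read off as follows. In Case 1 ($m\le N\alpha+\beta_i$) the point $u=1$ falls in the gap just after the bounded interval $m$, so intervals $1,\dots,m$ are fully contained and the telescoped sum with $k=m$ gives (\ref{back_22}). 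In Case 2 ($m>N\alpha+\beta_i$, $m\le N-1$) the point $u=1$ lies \emph{inside} the bounded interval $m$, so intervals $1,\dots,m-1$ are full (telescoped sum with $k=m-1$) and the straddling interval $m$ contributes only its realizable mass $\big(1-\tfrac{(m-1)+\beta_i}{N\alpha}\big)^2$ via $1-G(\cdot)=(1-\cdot)^2$, producing the closed form (\ref{pris_111}). In Case 3 ($m\ge N$) all $N-1$ bounded intervals are fully contained (telescoped sum with $k=N-1$, the $\chi_1$ term) and, additionally, the unbounded interval $\mathcal{A}_{i,N}$ now begins before $D$ and contributes its truncated tail $\big(1-\tfrac{(N-1)+\beta_i}{N\alpha}\big)^2$ (the $\chi_2$ term), giving (\ref{jame_11}).

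I expect the main obstacle to be bookkeeping rather than anything conceptual: pinning down the threshold conditions $m\lessgtr N\alpha\pm\beta_i$ exactly (these come from comparing $u=1$ against the endpoints of interval $m$ and against the gap $(\tfrac{m-\beta_i}{N\alpha},\tfrac{m+\beta_i}{N\alpha})$), and ensuring the straddling interval is counted once with the correct truncation so that the telescoped sum and the leftover partial interval combine into the stated expression. The degenerate endpoint $m=0$ (all intervals past $D$, so the probability is $0$) and the transitions as $N\alpha$ crosses an integer are handled separately but follow immediately from the continuity of $G$ at $u=1$.
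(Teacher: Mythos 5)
Your proposal is correct and follows essentially the same route as the paper's Appendix~D: write the probability as a telescoping sum of CDF increments over the ordered admissible intervals (using $\sum_{j=1}^{k}(2j-1)=k^2$), identify $m=\lceil N\alpha-\beta_i\rceil$ as locating $\delta=D$ within the interval/gap lattice, and split into the cases where $D$ falls in a gap, inside a bounded interval, or beyond all bounded intervals. Working in the normalized variable $u=\delta/D$ with $G(u)=u(2-u)$ is only a cosmetic repackaging of the paper's computation with $F_{\boldsymbol{\delta}}$ and $D/\theta$.
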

\begin{proof}
See Appendix~D.
\end{proof}
\textit{Remark 4}- If $\chi_{1}(i)=1$, it is guaranteed that $0<\beta_i<\frac{1}{2}$. In fact, $\beta_i=\frac{\rho_i(r)}{r}\stackrel{(a)}{<} \frac{\rho_i(r)}{1+\rho_i(r)}\stackrel{(b)}{<}\frac{1}{2}$ where $(a)$ and $(b)$ are due to $\rho_i(r)<r-1$ and $\rho_i(r)<1$, respectively.~$\Diamond$ 


Before we compute $\lim_{N\to\infty}p^{(ub)}_{outage}(i)$, we continue with a few examples in order to describe the behaviour of $p^{(ub)}_{outage}(i)$ as a function of $N$. The main observation is that there exists a critical value for $\alpha$, denoted by $\alpha^*_i$, such that the behaviour of $p^{(ub)}_{outage}(i)$ as a function of $N$ undergoes a phase transition depending on whether $\alpha< \alpha_i^*$ or $\alpha\ge\alpha_i^*$. If $\alpha\ge\alpha^*_i$, then $p^{(ub)}_{outage}(i)$ becomes a nondecreasing function of~$N$. If $\alpha<\alpha_i^*$, then $p^{(ub)}_{outage}(i)$ becomes an oscillatory function of $N$ that converges to its absolute minimum value as $N$ grows large. Therefore, if $\alpha$ is sufficiently small, not only does the average transmission rate approach its largest value $\lambda$ by increasing $N$, but also the probability of outage decreases to its smallest value. 

\begin{figure}[t]
\centering
\subfigure[]{
\includegraphics[scale=0.465]{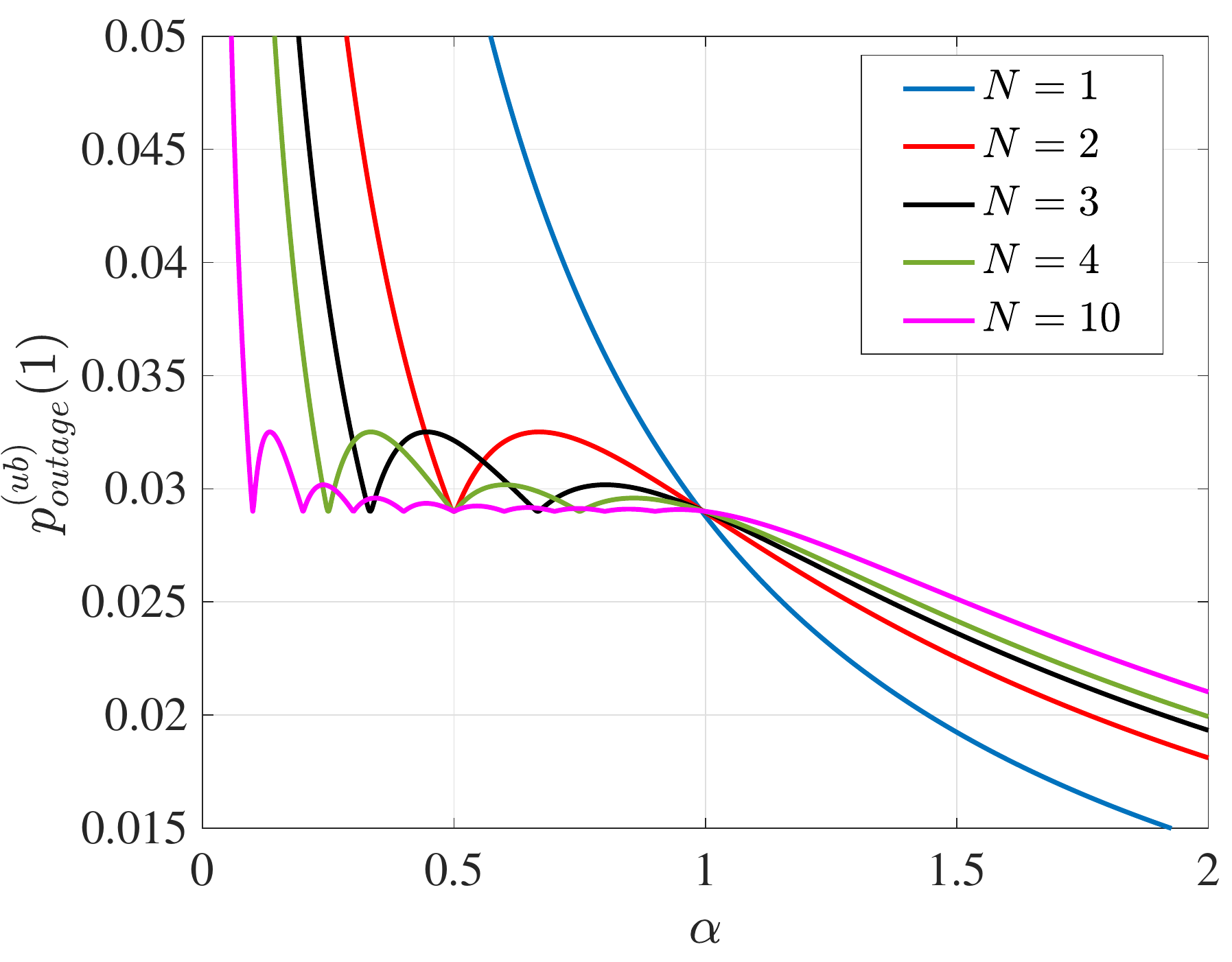}
\label{prs1_3}
}
\subfigure[]{
\includegraphics[scale=0.47]{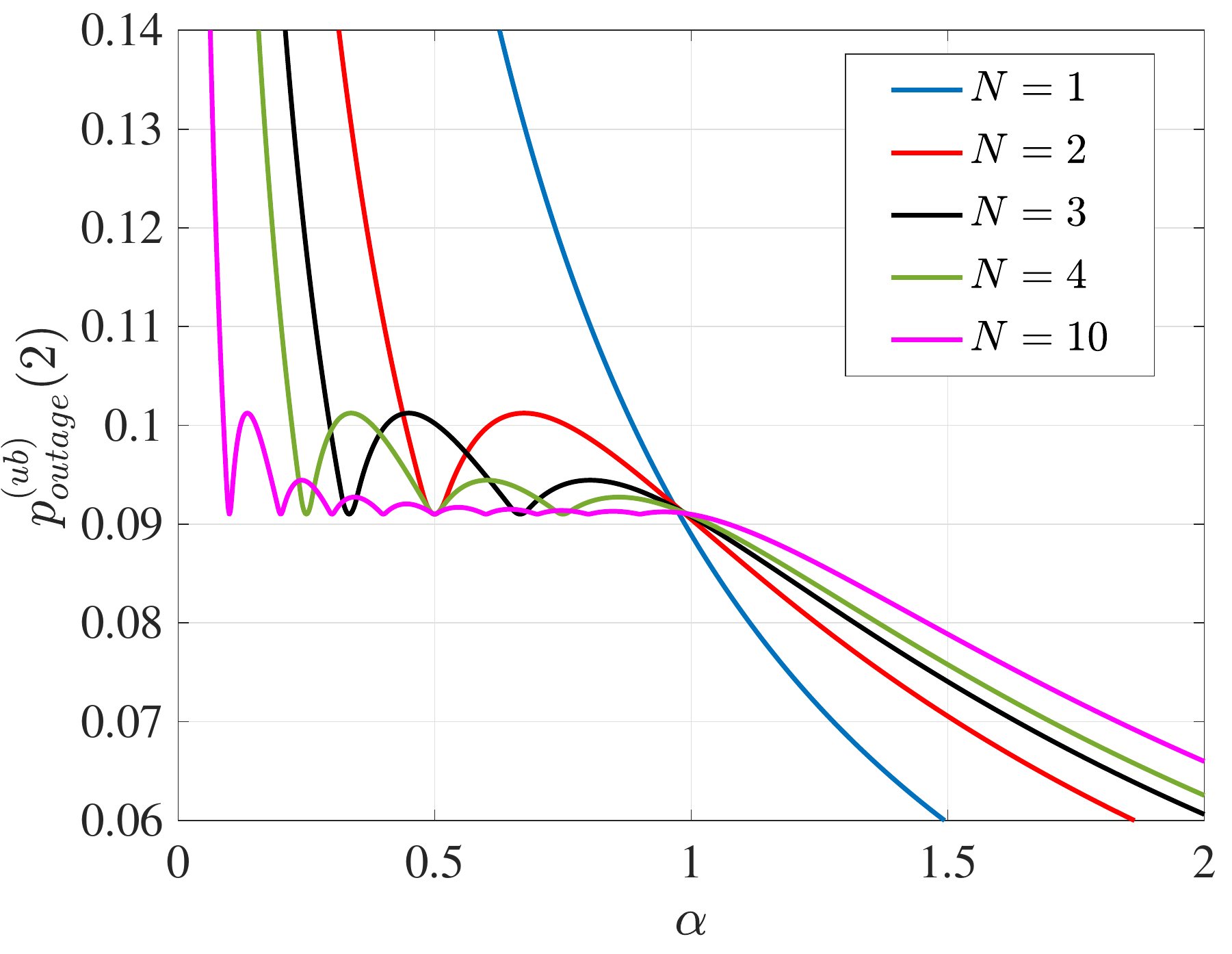}
\label{prs1_4}
}
\label{fig:subfigureExample}
\caption[Optional caption for list of figures]{ Let $\mathcal{X}_1=\mathcal{X}_2=\{0,1\}$, $\mathcal{Y}_1=\mathcal{Y}_2=\{0,1,2,3,4\}$ and the probability kernels $p_1(c|a,b)$ and $p_2(c|a,b)$ be identical given by the matrix in~(\ref{meho_00}). The system parameters are $\lambda=0.1$, $r=1.1$ and $\pi_1=\pi_2=\mathrm{Ber}(0.2)$. The receivers treat interference as noise. Panels (a) and (b) present plots of $p^{(ub)}_{outage}(1)$ and $p^{(ub)}_{outage}(2)$, respectively, as provided by Proposition~\ref{prop_7} in terms of $\alpha$ for different values of $N$. There is a phase transition in the behaviour of $p^{(ub)}_{outage}(i)$ as a function of $N$ around a critical value $\alpha_i^*\approx 1$ for $\alpha$. }
\label{prs1_11}
\end{figure}

\newpage
\textit{Example 1}- Consider a scenario where $\mathcal{X}_1=\mathcal{X}_2=\mathcal{X}=\{0,1\}$ and $\mathcal{Y}_1=\mathcal{Y}_2=\mathcal{Y}=\{0,1,2,3,4\}$. The marginal probability kernels $p_1(c|a,b)$ and $p_2(c|a,b)$ are identical for all $a,b\in \mathcal{X}$ and $c\in \mathcal{Y}$ described by the matrix\footnote{This transition matrix was generated randomly.}
\begin{eqnarray}
\label{meho_00}
\bordermatrix{(a,b)\textbackslash c&0&1&2&3&4\cr
                (0,0)&0.3266  &  0.1314  &  0.1674  &  0.3588  &  0.0158  \cr
                (0,1)& 0.3148  &  0.0612 &   0.2158  &  0.1898  &  0.2184 \cr
                (1,0)& 0.1905  &  0.3272 &   0.4279  &  0.0102  &  0.0442\cr
                (1,1)& 0.4091  &  0.2734  &  0.0970  &  0.1693  &  0.0512 }.
\end{eqnarray}
Assume the receivers treat interference as noise. It is easy to see that $\lambda^{(TIN)}$ defined in (\ref{l_tin}) is approximately equal to~$0.08$. Let $\lambda=0.1>\lambda^{(TIN)}$, $r=1.1$ and $\pi_1=\pi_2=\mathrm{Ber}(0.2)$. Then $\rho_1=0.016$, $\rho_2=0.0501$ and the condition $\rho_i<\min\{1,r-1\}$ in Proposition~\ref{prop_3} is met for both $i=1,2$. Fig.~\ref{prs1_11} presents plots of $p^{(ub)}_{outage}(i)$ for $i=1,2$ in terms of $\alpha=\lambda D$ for several values of $N$. There is a phase transition in the behaviour of $p^{(ub)}_{outage}(i)$ around a critical value $\alpha_i^*\approx 1$ for $\alpha$. If $\alpha\ge\alpha_i^*$, then $p^{(ub)}_{outage}(i)$ becomes increasing in terms of~$N$. If $\alpha<\alpha_i^*$, then $p^{(ub)}_{outage}(i)$ becomes an oscillatory function of $N$ that converges to its absolute minimum value as $N$ grows large.~$\Diamond$


  \newpage
\textit{Example 2}-  In this example we simplify the results of Proposition~\ref{prop_7} for  $N=1$ and $N=2$. Assume $0\le \rho_i<\min\{1,r-1\}$ so that $\chi_{1}(i)=\chi_{2}(i)=1$. Then
  \begin{eqnarray}
  \label{yasi_111}
p^{(ub)}_{outage}(i)\big|_{N=1}=\left\{\begin{array}{cc}
    1  & \beta_i\ge\alpha   \\
    \frac{\beta_i}{\alpha}(2-\frac{\beta_i}{\alpha})  &  \beta_i<\alpha
\end{array}\right..
\end{eqnarray}
and
  \begin{eqnarray}
  \label{hana_111}
  p^{(ub)}_{outage}(i)\big|_{N=2}=\left\{\begin{array}{cc}
    1  & \beta_i\ge2\alpha   \\
  1-(1-\frac{\beta_i}{2\alpha})^2  &   \beta_i<2\alpha\leq 1-\beta_i \\
     1-(1-2\beta_i)(2-\frac{1}{2\alpha})\frac{1}{2\alpha} & 1-\beta_i<2\alpha\leq 1+\beta_i\\
    1-(1-2\beta_i)(2-\frac{1}{2\alpha})\frac{1}{2\alpha}-(1-\frac{1}{2\alpha}(1+\beta_i))^2 & 1+\beta_i<2\alpha
\end{array}\right..
\end{eqnarray}
Assume $\alpha>\frac{1+\beta_i}{2}>\beta_i$. Simple algebra shows
\begin{eqnarray}
p^{(ub)}_{outage}(i)\big|_{N=2}-p^{(ub)}_{outage}(i)\big|_{N=1}=\frac{\beta_i}{\alpha^2}\Big(\frac{3\beta_i}{4}+\alpha-1\Big).
\end{eqnarray}
Then $N=2$ results in a smaller probability of outage compared to $N=1$ when $0\le\beta_i<\frac{2}{5}$ and $\frac{1+\beta_i}{2}<\alpha<1-\frac{3\beta_i}{4}$.~$\Diamond$
 

We now proceed to compute $\lim_{N\to\infty}p^{(ub)}_{outage}(i)$. This limit is particularly important  due to the fact that the average transmission rate $R(N,r)$ in (\ref{yar_11}) approaches its largest value $\lambda$ as $N$ grows to infinity. Recall that one of the conditions in the definition of an achievable outage level is that the average transmission rate be equal to $\lambda$. Inspecting the expressions for $p^{(ub)}_{outage}(i)$ given in Proposition~\ref{prop_7}, it is easy to see that  
  \begin{eqnarray}
  \label{out_inf}
\lim_{N\to\infty}p^{(ub)}_{outage}(i)=\left\{\begin{array}{cc}
 1-\frac{1}{\alpha}(2-\frac{1}{\alpha})(1-2\beta_i)\chi_{1}(i)-(1-\frac{1}{\alpha})^2\chi_{2}(i) &   \alpha\ge 1  \\
    1-(1-2\beta_i)\chi_{1}(i)  &  \alpha<1  
\end{array}\right.
\end{eqnarray}
This upper bound is the smallest when\footnote{In fact, $\chi_1(i)=1$ implies $\chi_2(i)=1$.} $\chi_{1}(i)=\chi_{2}(i)=1$. In this case, 
\begin{eqnarray}
\label{bet_1}
\lim_{N\to\infty}p^{(ub)}_{outage}(i)&=&\left\{\begin{array}{cc}
 \frac{2}{\alpha}(2-\frac{1}{\alpha})\beta_i &   \alpha\ge 1  \\
   2\beta_i &  \alpha<1  
\end{array}\right.\notag\\
&=&\kappa \beta_i,
\end{eqnarray}
where $\kappa$ is defined in (\ref{capp}). Hence, the worst-case probability of outage can be made as small as    
\begin{eqnarray}
\label{mod}
\inf_{r> 1}\lim_{N\to\infty}\max_{i=1,2} p^{(ub)}_{outage}(i)=\kappa\inf_{r>1: \chi_{1}(1)=\chi_{1}(2)=1 } \max\{\beta_1,\beta_2\},
\end{eqnarray}
Looking at the expressions for $\rho_i(r)$ in (\ref{lola}), we see that $\beta_i=\beta_i(r)=\rho_i(r)/r$ is increasing in terms of~$r$. Therefore, $\max\{\beta_1,\beta_2\}$ is also increasing in terms of $r$ and (\ref{mod}) can be written as 
\begin{eqnarray}
\inf_{r> 1}\lim_{N\to\infty}\max_{i=1,2} p^{(ub)}_{outage}(i)=\kappa\max\{\beta_1(r_{0}),\beta_2(r_{0})\},
\end{eqnarray}
where
\begin{eqnarray}
r_{0}&:=&\inf\{r>1: \chi_1(1)=\chi_1(2)=1\}\notag\\
&=&\inf\big\{r> 1: \max\{\rho_1(r), \rho_2(r)\}<\min\{1,r-1\}\big\}.
\end{eqnarray}
This concludes the proof of Theorem~\ref{thm1}. 
\section{Exploring the bounds in Theorem~1: The Gaussian Interference Channel}
Solving the inequality $\rho_i(r)<\min\{1,r-1\}$ involves one or two inequalities of the form $\frac{\lambda r-b}{a-b}<\min\{1,r-1\}$ for some $a,b>0$ when the receivers treat interference as noise or decode interference, respectively.  We will need the next lemma in order to interpret the results: 
   \begin{lem}
   \label{lem1}
   Let $a>b>0$. The solution to the inequality 
   \begin{eqnarray}
\frac{\lambda r-b}{a-b}<\min\{1,r-1\}
\end{eqnarray}
in the variable $r> 1$ is 
\begin{enumerate}[(i)]
  \item $1< r<\frac{a}{\lambda}$ when $\lambda<\min\{b,\frac{a}{2}\}$.
  \item $1< r<\frac{a-2b}{a-b-\lambda}$ when $\frac{a}{2}\leq \lambda<b$. Moreover, $1<\frac{a-2b}{a-b-\lambda}\le2$.
  \item $\frac{a-2b}{a-b-\lambda}<r<\frac{a}{\lambda}$ when $b\leq \lambda<\frac{a}{2}$. Moreover, $1\le\frac{a-2b}{a-b-\lambda}<2$.
  \item the empty set when $\lambda\geq \max\{b,\frac{a}{2}\}$. 
\end{enumerate}
   \end{lem}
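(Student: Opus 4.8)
The plan is to exploit the piecewise-linear structure of both sides. Since $\min\{1,r-1\}$ equals $r-1$ on $(1,2]$ and equals $1$ on $[2,\infty)$, and since $a>b>0$ makes $\frac{\lambda r-b}{a-b}$ a strictly increasing affine function of $r$ (here $\lambda>0$ is understood), I would introduce the gap function
\begin{eqnarray}
g(r):=\min\{1,r-1\}-\frac{\lambda r-b}{a-b},
\end{eqnarray}
which is affine on each of the two pieces $(1,2]$ and $[2,\infty)$, and reduce the claim to locating the set $\{r>1:g(r)>0\}$.

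First I would record the three quantities that control everything: the left-endpoint value $g(1)=\frac{b-\lambda}{a-b}$, whose sign is that of $b-\lambda$; the breakpoint value $g(2)=\frac{a-2\lambda}{a-b}$, whose sign is that of $a-2\lambda$; and the behaviour on $[2,\infty)$, where $g$ is strictly decreasing with unique root at $r=\frac{a}{\lambda}$. On the piece $(1,2]$, solving $g(r)=0$, i.e. $(r-1)(a-b)=\lambda r-b$, produces the candidate root
\begin{eqnarray}
r^*:=\frac{a-2b}{a-b-\lambda}.
\end{eqnarray}

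Next I would split into the four sign patterns of $(g(1),g(2))$, which match the four regimes of the lemma exactly. When $\lambda<\min\{b,a/2\}$ both endpoint values are positive, so affineness forces $g>0$ throughout $(1,2]$, while on $[2,\infty)$ positivity persists up to the root $a/\lambda>2$; this gives case (i). When $\lambda\ge\max\{b,a/2\}$ both values are nonpositive, hence $g\le0$ everywhere and the solution set is empty, giving case (iv). The two mixed cases, $a/2\le\lambda<b$ and $b\le\lambda<a/2$, are where $g(1)$ and $g(2)$ carry opposite signs; there $g$ is strictly monotone on $(1,2]$, so in particular $g$ is non-constant, $\lambda\neq a-b$, and $r^*$ is well defined. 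The solution is then the sub-interval of $(1,2]$ on which $g>0$, extended past the breakpoint and capped by $a/\lambda$ precisely when $g(2)>0$.

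The main point is bookkeeping rather than a genuine obstacle: I must confirm that in the two mixed cases $r^*$ lands in $(1,2]$ with the claimed orientation, and this falls out of monotonicity plus the endpoint signs. If $g(1)>0\ge g(2)$ then $g$ is decreasing, its root satisfies $1<r^*\le2$, and the solution is $1<r<r^*$ (case (ii)); if $g(1)\le0<g(2)$ then $g$ is increasing, $1\le r^*<2$, and the solution is $r^*<r<a/\lambda$ (case (iii)). The elementary sign checks on the numerator $a-2b$ and denominator $a-b-\lambda$ of $r^*$ are consistent because the defining inequalities force $a<2b$ together with $\lambda>a-b$ in case (ii), and $a>2b$ together with $\lambda<a-b$ in case (iii), so in each mixed case numerator and denominator share a sign and $r^*>0$. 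Assembling the four cases completes the proof.
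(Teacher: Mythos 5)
Your proof is correct and complete: the gap function $g(r)=\min\{1,r-1\}-\frac{\lambda r-b}{a-b}$ is affine on $(1,2]$ and on $[2,\infty)$, the signs of $g(1)=\frac{b-\lambda}{a-b}$ and $g(2)=\frac{a-2\lambda}{a-b}$ partition the parameter space exactly into the four regimes of the lemma, and your endpoint and sign checks for $r^*=\frac{a-2b}{a-b-\lambda}$ (including the degenerate possibilities $\lambda=a-b$, $\lambda=b$ and $\lambda=a/2$) all verify. The paper explicitly omits its proof of Lemma~\ref{lem1} as elementary, so there is nothing to compare against; your case analysis is precisely the kind of routine verification the authors had in mind.
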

   The elementary proof is omitted.
\newpage
  As an application of Lemma~1,  let us consider the Gaussian interference channel defined by  
   \begin{eqnarray}
   \label{gic}
\boldsymbol{y}_{i,t}=\boldsymbol{x}_{i,t}+\sqrt{c_i}\,\boldsymbol{x}_{i',t}+\boldsymbol{z}_{i,t}, \,\,\,\,\,i=1,2,
\end{eqnarray}
where $\boldsymbol{z}_{i,t}$ is a $N(0,1)$ random variable and the input of Tx~$i$ is subject to a maximum average transmission power of $P_i$. The numbers $\sqrt{c_1}$ and $\sqrt{c_2}$ are the crossover channel coefficients.  It is easy to see that $\overline{\lambda}$ in~(\ref{l_bar}) is given by 
\begin{eqnarray}
\overline{\lambda}=\min\{\mathsf{C}(P_1),\mathsf{C}(P_2)\},
\end{eqnarray}
where $\mathsf{C}(x)=\frac{1}{2}\log(1+x)$. For tractability reasons and in order to gain some insight, let us restrict our attention to point-to-point Gaussian codes. This amount to $\boldsymbol{x}_i$ in equations (\ref{ino_1}) to (\ref{ino_2}) being an $N(0,P_i)$ random variable for $i=1,2$. We get 
\begin{eqnarray}
C_i^*=C_{i,i'}=\mathsf{C}(P_i),\,\,\,\,\,\,\,C_i=\mathsf{C}\Big(\frac{P_i}{1+c_iP_{i'}}\Big)
\end{eqnarray}
and
\begin{eqnarray}
\tilde{C}_i^*=\mathsf{C}(c_i P_{i'}),\,\,\,\,\,\tilde{C}_i=\mathsf{C}\Big(\frac{c_i P_{i'}}{1+P_i}\Big).
\end{eqnarray}

 Let us address receivers that treat interference as noise and decode (and cancel) interference separately. 
   \begin{enumerate}
  \item Assume the receivers treat interference as noise. Then $\lambda^{(TIN)}$ in (\ref{l_tin}) is given by 
  \begin{eqnarray}
  \lambda^{(TIN)}=\min\Big\{ \mathsf{C}\Big(\frac{P_1}{1+c_1P_{2}}\Big), \mathsf{C}\Big(\frac{P_2}{1+c_2P_{1}}\Big) \Big\}.
\end{eqnarray}
We will assume $ \lambda^{(TIN)}\le\lambda\le\overline{\lambda}$. We need to solve the inequalities $\rho^{(TIN)}_i(r)<\min\{1,r-1\}$ simultaneously for $i=1,2$ in terms of $r$. In order to solve $\rho^{(TIN)}_i(r)<\min\{1,r-1\}$, we apply Lemma~1 with $a=C^*_i$ and $b=C_i$. The three cases $\lambda<\min\{b,\frac{a}{2}\}$, $\frac{a}{2}\leq \lambda<b$ and $b\leq \lambda<\frac{a}{2}$ must be addressed separately for $i=1,2$. This gives a total of $3\times 3=9$ cases to consider. Since $ \lambda\ge\lambda^{(TIN)}$, then at least one of the conditions $\lambda\geq C_1$ and $\lambda\ge C_2$ must hold. This reduces the number of cases to five. In Appendix~D, we study these five cases in detail. The results are captured in the next proposition:
\newpage
\begin{proposition}
Consider the Gaussian interference channel in (\ref{gic}) where the receivers treat interference as noise. Then 
\begin{eqnarray}
\epsilon^{(TIN)}(\lambda)=\kappa \frac{\lambda-C_{i}}{C^*_{i}-2C_{i}}
\end{eqnarray}
where the index $i$ is such that exactly one of the following sets of conditions holds:
\begin{enumerate}
  \item $C_i\leq\lambda<\frac{C_i^*}{2}$ and $\lambda<\min\{C_{i'},\frac{C^*_{i'}}{2}\}$.
  \item $C_i\leq\lambda<\frac{C_i^*}{2}$, $\frac{C^*_{i'}}{2}\leq \lambda<C_{i'}$ and $\frac{C_i^*-2C_i}{C_i^*-C_i-\lambda}<\frac{C_{i'}^*-2C_{i'}}{C_{i'}^*-C_{i'}-\lambda}$.
  \item $C_1\leq\lambda<\frac{C_1^*}{2}$, $C_2\leq\lambda<\frac{C_2^*}{2}$ and $\frac{C_i^*-2C_i}{C_i^*-C_i-\lambda}\ge\frac{C_{i'}^*-2C_{i'}}{C_{i'}^*-C_{i'}-\lambda}$.
\end{enumerate}
\end{proposition}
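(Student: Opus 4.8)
The plan is to trace the quantity $\epsilon^{(TIN)}(\lambda)$ back to the ingredients assembled in Section~V. By Theorem~\ref{thm1} and the identity $\beta_i(r)=\rho_i^{(TIN)}(r)/r$ used in (\ref{bet_1}), we may write
\begin{eqnarray}
\epsilon^{(TIN)}(\lambda)=\kappa\max\{\beta_1(r_0),\beta_2(r_0)\},
\end{eqnarray}
where $r_0=\inf\{r>1:\max\{\rho_1^{(TIN)}(r),\rho_2^{(TIN)}(r)\}<\min\{1,r-1\}\}$. First I would observe that, writing $S_i$ for the solution set in $r>1$ of the single inequality $\rho_i^{(TIN)}(r)<\min\{1,r-1\}$, the set appearing in the definition of $r_0$ is exactly $S_1\cap S_2$, so that $r_0=\inf(S_1\cap S_2)$. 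This reduces the problem to (a) describing each $S_i$ and (b) locating the left endpoint of their intersection.

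For step (a) I would apply Lemma~\ref{lem1} with $a=C_i^*$ and $b=C_i$ (legitimate because the point-to-point Gaussian expressions give $C_i^*>C_i$). Each $S_i$ is then an open interval $(\ell_i,u_i)$ whose endpoints are read off from the four cases of the lemma; in particular $\ell_i=1$ whenever $\lambda<C_i$ (cases (i)--(ii)), while $\ell_i=\frac{C_i^*-2C_i}{C_i^*-C_i-\lambda}\in[1,2)$ in the case $C_i\le\lambda<C_i^*/2$ (case (iii)), and $S_i=\varnothing$ in case (iv). Since $\lambda\ge\lambda^{(TIN)}=\min\{C_1,C_2\}$, at least one user satisfies $\lambda\ge C_i$, and the bound $\lambda\le\overline{\lambda}=\min\{C_1^*,C_2^*\}$ together with the requirement that $S_1\cap S_2$ be nonempty forces every such user into case (iii). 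This is precisely the reduction of the nine a priori cases to the five relevant ones alluded to before the proposition.

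The computational heart is step (b) together with the evaluation of $\beta_i$. Because the left endpoint of a case-(iii) interval is exactly the root of $\rho_i^{(TIN)}(r)=r-1$ (the branch of $\min\{1,r-1\}$ active for $r<2$), at $r=\ell_i$ the constraint for user $i$ binds with equality, so $\beta_i(\ell_i)=\rho_i^{(TIN)}(\ell_i)/\ell_i=1-1/\ell_i$. Substituting $\ell_i=\frac{C_i^*-2C_i}{C_i^*-C_i-\lambda}$ yields the clean identity
\begin{eqnarray}
\beta_i(\ell_i)=\frac{\lambda-C_i}{C_i^*-2C_i}.
\end{eqnarray}
Now $r_0=\max\{\ell_1,\ell_2\}$, and I would let $i$ denote the index attaining this maximum. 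For the other user $i'$ one has $r_0\ge\ell_{i'}$ and $r_0<u_{i'}$ (the latter being exactly what the extra inequalities in the listed cases guarantee, e.g.\ $\ell_i<\frac{C_{i'}^*-2C_{i'}}{C_{i'}^*-C_{i'}-\lambda}$ in case~2), so $r_0\in S_{i'}$ and hence $\rho_{i'}^{(TIN)}(r_0)<r_0-1=\rho_i^{(TIN)}(r_0)$; dividing by $r_0$ gives $\beta_{i'}(r_0)<\beta_i(r_0)$. Thus the maximum in $\epsilon^{(TIN)}(\lambda)$ is attained at $i$ and equals $\frac{\lambda-C_i}{C_i^*-2C_i}$, the asserted formula.

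It then remains only to match the three listed condition sets to the three ways the dominant index $i$ can arise: user $i'$ in case~(i) (condition~1), user $i'$ in case~(ii) with $\ell_i<u_{i'}$ securing a nonempty intersection (condition~2), and both users in case~(iii) with $\ell_i\ge\ell_{i'}$ (condition~3). I expect the principal obstacle to be organizational rather than analytic: one must verify carefully that these three sets are mutually exclusive and jointly cover the five surviving cases, and that in each the intersection $S_1\cap S_2$ is genuinely nonempty so that $r_0$ is well defined and lies in the closure of $S_{i'}$. Once this bookkeeping is settled, the identity $\beta_i(\ell_i)=\frac{\lambda-C_i}{C_i^*-2C_i}$ delivers the conclusion at once.
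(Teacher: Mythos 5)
Your proposal is correct and follows essentially the same route as the paper's Appendix~E: apply Lemma~\ref{lem1} with $a=C_i^*$, $b=C_i$, reduce to the five surviving cases, identify $r_0$ as the left endpoint of $S_1\cap S_2$, and evaluate $\kappa\beta_i(r_0)$. Your observation that the case-(iii) left endpoint is the root of $\rho_i(r)=r-1$, so that $\beta_i(\ell_i)=1-1/\ell_i=\frac{\lambda-C_i}{C_i^*-2C_i}$ and the binding user automatically dominates, is a clean unification of the five separate algebraic verifications the paper carries out explicitly, but it is the same underlying argument.
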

\item Assume the receivers decode and cancel interference. Then $\lambda^{(DI)}$ in (\ref{l_di}) is given by 
\begin{eqnarray}
\lambda^{(DI)}=\min\Big\{\mathsf{C}(P_1), \mathsf{C}(P_2),\mathsf{C}\Big(\frac{c_1P_2}{1+P_1}\Big), \mathsf{C}\Big(\frac{c_2P_1}{1+P_2}\Big)\Big\}.
\end{eqnarray} 
We will assume $\lambda^{(DI)}\le\lambda\le\overline{\lambda}$. 
  Since $C_i^*=C_{i,i'}$, we need to redefine $\rho_i^{(DI)}(r)=\frac{\lambda r-\tilde{C}_{i}}{\tilde{C}_i^*-\tilde{C_i}}$ as explained in Remark~3 at the end of Section~IV. Then successful communication is guaranteed for user~$i$ if $\rho_i^{(DI)}(r)<\min\{1,r-1\}$ and $r<\frac{C_i^*}{\lambda}$ for $i=1,2$.  In order to solve $\rho^{(DI)}_i(r)<\min\{1,r-1\}$ we apply Lemma~1 with $a=\tilde{C}^*_i$ and $b=\tilde{C}_i$. The three cases $\lambda<\min\{b,\frac{a}{2}\}$, $\frac{a}{2}\leq \lambda<b$ and $b\leq \lambda<\frac{a}{2}$ must be addressed separately for $i=1,2$. As $\lambda\geq \lambda^{(DI)}$, at least one of $\lambda\geq \tilde{C}_1$ and $\lambda\geq \tilde{C}_2$ holds. Hence, there are a total of five cases to consider. The next proposition summarizes the results: 
\begin{proposition}
Consider the Gaussian interference channel in (\ref{gic}) where the receivers decode and cancel interference. Then 
\begin{eqnarray}
\epsilon^{(DI)}(\lambda)=\kappa \frac{\lambda-\tilde{C}_{i}}{\tilde{C}^*_{i}-2\tilde{C}_{i}}
\end{eqnarray}
where the index $i$ is such that exactly one of the following sets of conditions holds:
\begin{enumerate}
  \item $\tilde{C}_i\leq\lambda<\frac{\tilde{C}_i^*}{2}$, $\lambda<\min\{\tilde{C}_{i'},\frac{\tilde{C}^*_{i'}}{2}\}$ and $\frac{\tilde{C}_i^*-2\tilde{C}_i}{\tilde{C}_i^*-\tilde{C}_i-\lambda}<\frac{1}{\lambda}\min\{C^*_1,C^*_2\}$.
  \item $\tilde{C}_i\leq\lambda<\frac{\tilde{C}_i^*}{2}$, $\frac{\tilde{C}^*_{i'}}{2}\leq \lambda<\tilde{C}_{i'}$ and $\frac{\tilde{C}_i^*-2\tilde{C}_i}{\tilde{C}_i^*-\tilde{C}_i-\lambda}<\min\{\frac{\tilde{C}_{i'}^*-2\tilde{C}_{i'}}{\tilde{C}_{i'}^*-\tilde{C}_{i'}-\lambda},\frac{C_{i'}^*}{\lambda}\}\leq \frac{C_i^*}{\lambda}$.
  \item $\tilde{C}_1\leq\lambda<\frac{\tilde{C}_1^*}{2}$, $\tilde{C}_2\leq\lambda<\frac{\tilde{C}_2^*}{2}$ and $\frac{\tilde{C}_{i'}^*-2\tilde{C}_{i'}}{\tilde{C}_{i'}^*-\tilde{C}_{i'}-\lambda}\le\frac{\tilde{C}_i^*-2\tilde{C}_i}{\tilde{C}_i^*-\tilde{C}_i-\lambda}\le\frac{1}{\lambda}\min\{C_1^*,C_2^*\}$.
  \end{enumerate}
\end{proposition}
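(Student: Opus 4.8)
The plan is to reduce everything to the two ingredients already in hand: the Section~V formula $\epsilon^{(DI)}(\lambda)=\kappa\max\{\beta_1(r_0),\beta_2(r_0)\}$ with $\beta_i(r)=\rho_i^{(DI)}(r)/r$ increasing in $r$ and $r_0=\inf\{r>1:\chi_1(1)=\chi_1(2)=1\}$, together with Lemma~\ref{lem1}. Since $C_i^*=C_{i,i'}$ for point-to-point Gaussian codes, Remark~3 dictates $\rho_i^{(DI)}(r)=\frac{\lambda r-\tilde{C}_i}{\tilde{C}_i^*-\tilde{C}_i}$ and appends the power constraint, so $\chi_1(i)=1$ means both $\rho_i^{(DI)}(r)<\min\{1,r-1\}$ and $r<C_i^*/\lambda$. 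First I would feed $a=\tilde{C}_i^*$, $b=\tilde{C}_i$ into Lemma~\ref{lem1} to get, for each user, the solution set $S_i$ of the first inequality, and then intersect with $(1,C_i^*/\lambda)$. The decisive structural observation is that a left endpoint strictly exceeding $1$ occurs only in case~(iii) of Lemma~\ref{lem1}, namely $\tilde{C}_i\le\lambda<\tilde{C}_i^*/2$, where it equals $r_0^{(i)}:=\frac{\tilde{C}_i^*-2\tilde{C}_i}{\tilde{C}_i^*-\tilde{C}_i-\lambda}\in[1,2)$; in cases~(i),(ii) the left endpoint is $1$.

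Next I would exploit the hypothesis $\lambda\ge\lambda^{(DI)}$, which forces at least one of $\lambda\ge\tilde{C}_1$, $\lambda\ge\tilde{C}_2$, hence puts at least one user in case~(iii) (a user in the empty case~(iv) would be in permanent outage and is excluded by the existence of a finite $\epsilon^{(DI)}$). Enumerating the case of each user subject to ``at least one in (iii)'' leaves exactly the five combinations ``$i$ in (iii), $i'$ in (i)'', ``$i$ in (iii), $i'$ in (ii)'', and ``both in (iii)'', which correspond under the relabeling $i\leftrightarrow i'$ to the three displayed conditions. Because $r_0$ is the infimum of $(S_1\cap(1,C_1^*/\lambda))\cap(S_2\cap(1,C_2^*/\lambda))$, it equals the largest of the two left endpoints, i.e.\ $r_0^{(i)}$ for the user(s) in case~(iii). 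In each of the three conditions the appended inequality chain is precisely the certificate I need: it states $r_0^{(i)}$ lies strictly below the competing right endpoints of the form $C_\bullet^*/\lambda$, so the intersection is nonempty and $r_0=r_0^{(i)}$; the $\tilde{C}_\bullet^*/\lambda$ right endpoints never bind because cases~(i)/(iii) give $\tilde{C}_\bullet^*/\lambda>2>r_0^{(i)}$.

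Finally I would evaluate the maximum. At $r_0=r_0^{(i)}$ the user-$i$ inequality holds with equality on the branch $\min\{1,r-1\}=r-1$ (legitimate since $r_0<2$), a direct check showing $\frac{\lambda r-\tilde{C}_i}{\tilde{C}_i^*-\tilde{C}_i}=r-1$ is solved exactly by $r_0^{(i)}$; thus $\rho_i^{(DI)}(r_0)=r_0-1$ and $\beta_i(r_0)=1-1/r_0$. For the other user $r_0$ is strictly interior to its solution set, so $\rho_{i'}^{(DI)}(r_0)<r_0-1$ and $\beta_{i'}(r_0)<\beta_i(r_0)$, whence $\max\{\beta_1(r_0),\beta_2(r_0)\}=\beta_i(r_0)$. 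Substituting $r_0^{(i)}$ into $\beta_i(r)=\frac{\lambda r-\tilde{C}_i}{r(\tilde{C}_i^*-\tilde{C}_i)}$ collapses, after the same algebra used for the treating-interference-as-noise proposition, to $\frac{\lambda-\tilde{C}_i}{\tilde{C}_i^*-2\tilde{C}_i}$, giving $\epsilon^{(DI)}(\lambda)=\kappa\,\frac{\lambda-\tilde{C}_i}{\tilde{C}_i^*-2\tilde{C}_i}$.

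The hard part will be bookkeeping rather than conceptual: correctly matching the five surviving case-combinations to the three displayed conditions via the $i\leftrightarrow i'$ symmetry, and verifying inside each that the stated inequality chain is exactly the nonemptiness-plus-dominance condition—that $r_0^{(i)}$ undercuts the binding $C^*/\lambda$ endpoints while the $\tilde{C}^*/\lambda$ endpoints stay slack. I would also dispatch the degenerate boundaries (for instance $\tilde{C}_i^*=2\tilde{C}_i$, or $\lambda^{(DI)}$ attained by some $C_j^*$, which would force $\lambda=\overline{\lambda}$) and note that the whole computation mirrors the preceding treating-interference-as-noise proposition verbatim with $(a,b)=(\tilde{C}_i^*,\tilde{C}_i)$ in place of $(C_i^*,C_i)$, augmented only by the extra right endpoints $C_i^*/\lambda$ coming from Remark~3.
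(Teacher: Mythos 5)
Your proposal is correct and follows essentially the same route as the paper: the paper proves Proposition~5 by repeating the Appendix~E case analysis for Proposition~4 with Lemma~1 applied to $(a,b)=(\tilde{C}_i^*,\tilde{C}_i)$ and the extra endpoints $r<C_i^*/\lambda$ from Remark~3, exactly as you describe, including the reduction to five case-combinations, the identification of $r_0$ with the larger left endpoint $\frac{\tilde{C}_i^*-2\tilde{C}_i}{\tilde{C}_i^*-\tilde{C}_i-\lambda}$, and the algebraic collapse of $\beta_i(r_0)$ to $\frac{\lambda-\tilde{C}_i}{\tilde{C}_i^*-2\tilde{C}_i}$. The remaining work you defer (matching the five combinations to the three displayed condition sets and checking which right endpoints bind) is exactly the bookkeeping the paper carries out, so no conceptual gap remains.
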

\begin{proof}
The proof is similar to that of Proposition~4.  
\end{proof}
\end{enumerate}
\begin{figure}[t]
  \centering
  \includegraphics[scale=0.5] {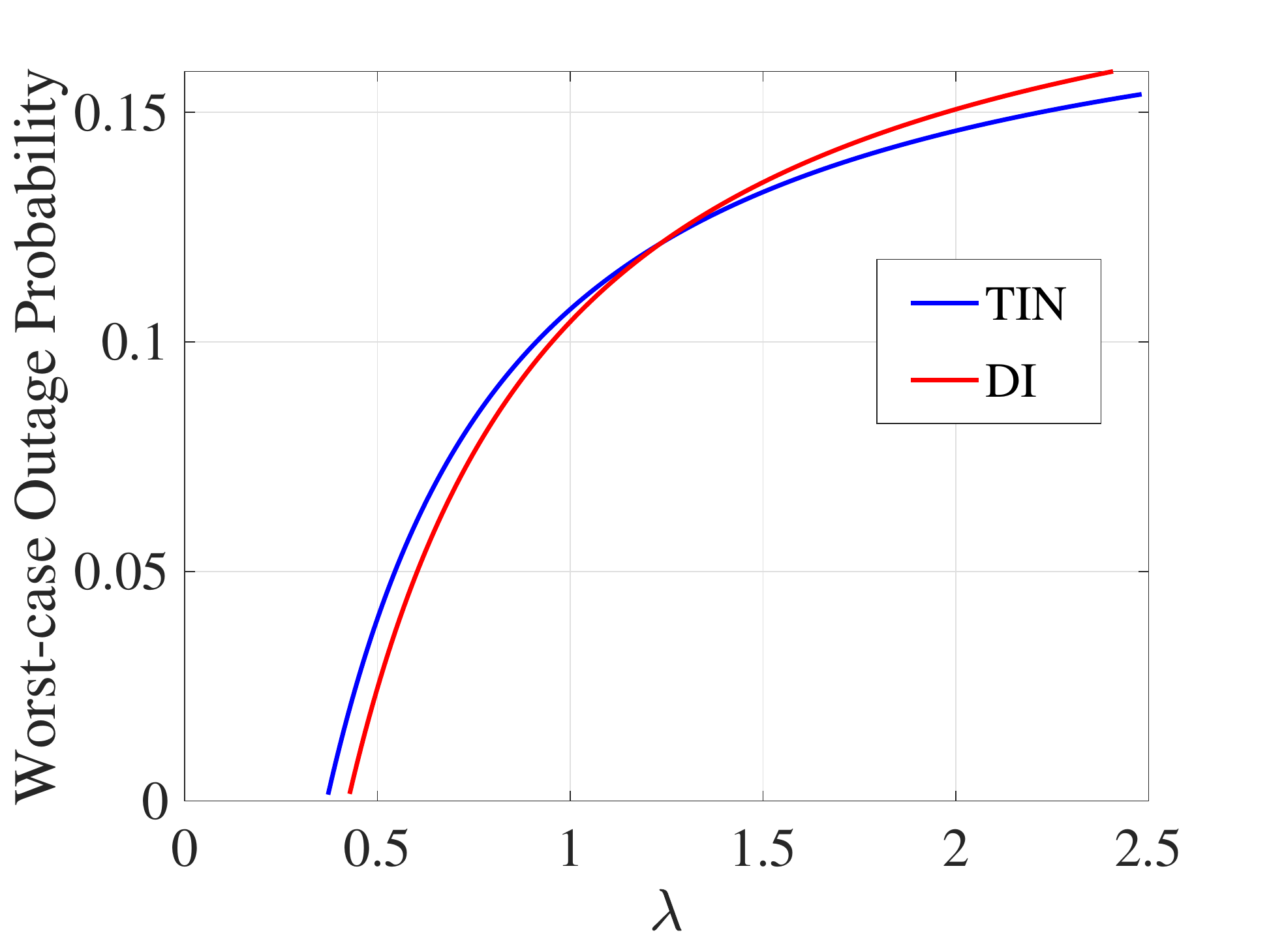}
  \caption{Plots of $\epsilon^{(TIN)}(\lambda)$ and $\epsilon^{(DI)}(\lambda)$ in a Gaussian interference channel with $D=5$, $P_1=P_2=30\,\mathrm{dBW}$, $c_1=0.8$ and $c_2=1.5$.}
  \label{fig1_tin_di}
 \end{figure} 

\textit{Example 3}- Consider a Gaussian interference channel with $D=5$, $P_1=P_2=30\,\mathrm{dBW}$, $c_1=0.8$ and $c_2=1.5$. We get $C_1=0.5845$, $C_2=0.3683$, $C_1^*=C_2^*=4.9836$, $\tilde{C}_1=0.4237$, $\tilde{C}_2=0.6605$, $\tilde{C}_1^*=4.8228$ and $\tilde{C}_2^*=5.2759$. Then $\lambda^{(TIN)}=0.3720$, $\lambda^{(DI)}=0.4279$ and $\overline{\lambda}=4.9836$. Fig.~\ref{fig1_tin_di} presents the plots of the bounds $\epsilon^{(TIN)}(\lambda)$ and $\epsilon^{(DI)}(\lambda)$ in terms of $\lambda$. Proposition~4 allows us to compute $\epsilon^{(TIN)}(\lambda)$ only for  values of $\lambda$ as large as $\min\{\frac{1}{2}C^*_1, \frac{1}{2}C^*_2\}=2.4918$. Similarly, Proposition~5 allows us to compute $\epsilon^{(DI)}(\lambda)$ only for  values of $\lambda$ as large as $\min\{\frac{1}{2}\tilde{C}^*_1, \frac{1}{2}\tilde{C}^*_2\}=2.4114$. We observe that decoding interference outperforms treating interference as noise when $\lambda$ is approximately larger than $1.25$.~$\Diamond$

  \section{Discussion and Concluding Remarks}
  We have studied a two-user IC with gradual data arrival in the absence of feedback. The data sources at the two transmitters were asynchronous and the amount of Tx-Tx asynchrony was unknown to both users. In a scenario where the communication load per user was $n$ bits, we called a number $0<\epsilon<1$ an achievable outage level if the transmission rate per user is $\lambda$ and the probability that the bit-error-rate per user does not vanish is not larger than $\epsilon$. We introduced the tradeoff curve $\epsilon(\lambda)$ which gives the smallest outage level when the rate of data arrival is $\lambda$.  Our main contribution was an upper bound (achievability result) on $\epsilon(\lambda)$. The proposed method of communication was a block transmission scheme where each user transmits a codeword as soon as there are enough bits accumulated in its buffer to represent one. We only studied the performance of random point-to-point encoders and decoders that treat interference as noise or decode interference. It is straightforward to extend the results of the paper to encoders that employ superposition coding (the Han-Kobayashi scheme). Such extensions were omitted in order to maintain the presentation as short as possible.
 \newpage 
 Possible directions for future research include
 \begin{enumerate}
  \item Extending the results to the case where the rates of data arrival $\lambda_1$ and $\lambda_2$ at the two transmitters are not identical. 
  \item Proposing a transmission scheme that outperforms the block transmission scheme in this paper.
  \item Deriving a lower bound (converse result) on $\epsilon(\lambda)$ or $\epsilon(\lambda_1,\lambda_2)$ in general. 
\end{enumerate}

  \section*{Appendix~A; Proof of Communication Failure when $\lambda>\overline{\lambda}$ }
 Fix realizations $d_i$ for $\boldsymbol{d}_i$ for $i=1,2$ and assume the encoder-decoder pairs $(f_{i,n}, g_{i,n})$ are such that 
 \begin{eqnarray}
 \label{cind_11}
\lim_{n\to\infty}R_{i,n}=\lambda
\end{eqnarray}
and
\begin{eqnarray}
\label{cind_22}
\lim_{n\to\infty}p_{i,n}^{(b)}(d_1,d_2)=0
\end{eqnarray}
  for $i=1,2$. Since the two sources are independent, it is easy to see that 
  \begin{eqnarray}
 (\boldsymbol{b}_{i,1},\cdots,\boldsymbol{b}_{i,n})\rightarrow (\boldsymbol{x}_{i,t})_{t\in \mathcal{T}_i}\rightarrow (\boldsymbol{y}_{i,t})_{t\in \mathcal{T}_i}\rightarrow (\hat{\boldsymbol{b}}_{i,1},\cdots,\hat{\boldsymbol{b}}_{i,n})
\end{eqnarray}
forms a Markov chain. Then we can use the data processing inequality to write  
 \begin{eqnarray}
 \label{conv_123}
I\big((\boldsymbol{x}_{i,t})_{t\in \mathcal{T}_i};(\boldsymbol{y}_{i,t})_{t\in \mathcal{T}_i}\big)&\stackrel{}{\geq}& I(\boldsymbol{b}_{i,1},\cdots,\boldsymbol{b}_{i,n};  \hat{\boldsymbol{b}}_{i,1},\cdots,\hat{\boldsymbol{b}}_{i,n}) \notag\\
&=&H(\boldsymbol{b}_{i,1},\cdots,\boldsymbol{b}_{i,n})-H(\boldsymbol{b}_{i,1},\cdots,\boldsymbol{b}_{i,n} |  \hat{\boldsymbol{b}}_{i,1},\cdots,\hat{\boldsymbol{b}}_{i,n})\notag\\
&\stackrel{(a)}{\geq}&\sum_{j=1}^n H(\boldsymbol{b}_{i,j})-\sum_{j=1}^n H(\boldsymbol{b}_{i,j}|\hat{\boldsymbol{b}}_{i,j})\notag\\
&\stackrel{(b)}{\geq}& n-\sum_{j=1}^nh_b\big(\Pr(\hat{\boldsymbol{b}}_{i,j}\neq \boldsymbol{b}_{i,j})\big)\notag\\
&\stackrel{}{=}& n-n\sum_{j=1}^n\frac{1}{n}h_b\big(\Pr(\hat{\boldsymbol{b}}_{i,j}\neq \boldsymbol{b}_{i,j})\big)\notag\\
&\stackrel{(c)}{\geq}& n-nh_b\Big(\frac{1}{n}\sum_{j=1}^n\Pr(\hat{\boldsymbol{b}}_{i,j}\neq \boldsymbol{b}_{i,j})\Big)\notag\\
&=&n\big(1-h_b(p_{i,n}^{(b)}(d_1,d_2))\big),
\end{eqnarray}
where $(a)$ is due to $\boldsymbol{b}_{i,1},\cdots, \boldsymbol{b}_{i,n}$ being independent and the fact that the joint entropy is not larger than the sum of marginal entropies, $(b)$ is due to $\boldsymbol{b}_{i,1},\cdots, \boldsymbol{b}_{i,n}$ being Bernoulli with parameter $\frac{1}{2}$ and Fano's inequality where $h_b(\cdot)$ is the binary entropy function and $(c)$ is due to Jensen's inequality and concavity of $h_{b}(\cdot)$. On the other hand,
\begin{eqnarray}
\label{baroone_123}
I\big((\boldsymbol{x}_{i,t})_{t\in \mathcal{T}_i};(\boldsymbol{y}_{i,t})_{t\in \mathcal{T}_i}\big)&\stackrel{(a)}{\leq}&I\big((\boldsymbol{x}_{i,t})_{t\in \mathcal{T}_i};(\boldsymbol{y}_{i,t})_{t\in \mathcal{T}_{i}}|(\boldsymbol{x}_{i',t})_{t\in \mathcal{T}_{i}}\big)\notag\\
&\stackrel{(b)}{\leq}&\sum_{t\in \mathcal{T}_i}I(\boldsymbol{x}_{i,t};\boldsymbol{y}_{i,t}|\boldsymbol{x}_{i',t})\notag\\
&\stackrel{}{\le}&|\mathcal{T}_i|\max_{\pi_1, \pi_2}I(\boldsymbol{x}_i; \boldsymbol{y}_i|\boldsymbol{x}_{i'})\notag\\
&\stackrel{(c)}{=}&|\mathcal{T}_i|\max_{\pi_1, \pi_2}C_{i,i'},
\end{eqnarray}
where $(a)$ is due to the independence of $(\boldsymbol{x}_{i,t})_{t\in \mathcal{T}_i}$ and $(\boldsymbol{x}_{i',t})_{t\in \mathcal{T}_{i}}$, $(b)$ holds  because the channel is memoryless and $C_{i,i'}$ in $(c)$ was defined in (\ref{ciip}). By (\ref{conv_123}) and (\ref{baroone_123}), 
\begin{eqnarray}
(1-h_b(p_{i,n}^{(b)}(d_1,d_2))R_{i,n}\leq\max_{\pi_1, \pi_2}C_{i,i'},
\end{eqnarray}
where we replaced $\frac{n}{|\mathcal{T}_i|}$ by $R_{i,n}$. Letting $n$ grow to infinity and using (\ref{cind_11}) and (\ref{cind_22}), we get 
\begin{eqnarray}
\lambda\leq \max_{\pi_1, \pi_2}C_{i,i'}.
\end{eqnarray}

\begin{figure}[t]
  \centering
  \includegraphics[scale=0.7] {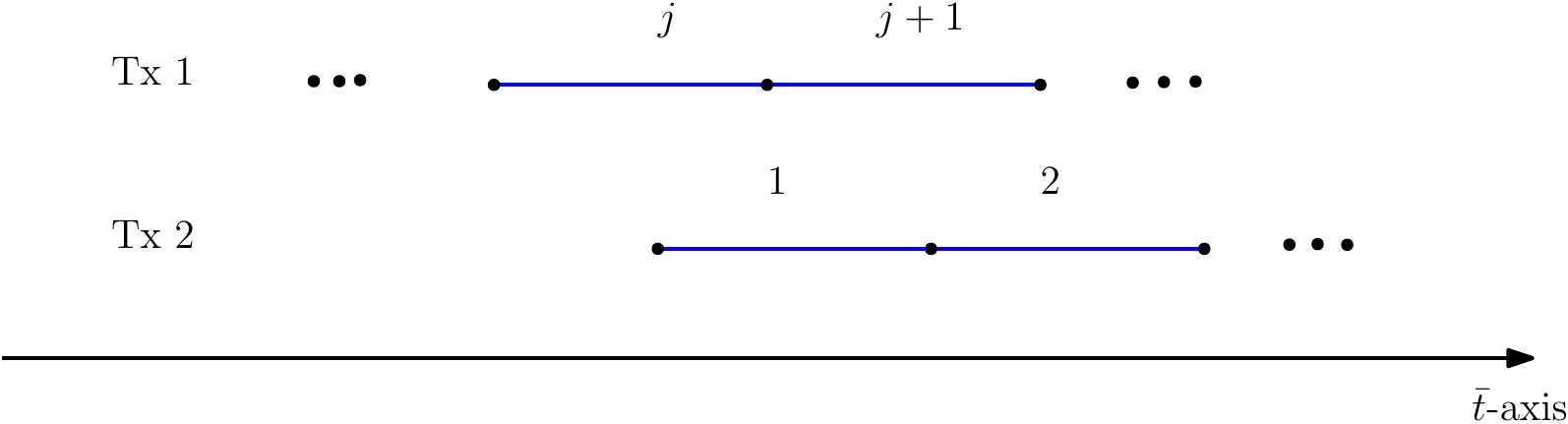}
  \caption{Communication in a two-user IC with gradual data arrival with $r<1$. The blue intervals identify the transmission span of a codeword along the $\bar{t}$-axis. There are no gaps between consecutively transmitted codewords when $r<1$.  The first codeword of Tx~2 is transmitted sometime during the transmission of the $j^{th}$ codeword of Tx~1.  }
  \label{fig1_nogap}
 \end{figure}

 \section*{Appendix~B; Supremacy of $r>1$ in contrast to $r<1$}
 Throughout this appendix we denote $\frac{\delta}{\theta_N}$ by $\delta'$ for notational simplicity. We assume Tx~1 becomes active earlier than Tx~2, i.e., $d_1<d_2$. If $r<1$, there will be no gap between consecutively transmitted codewords. If $\delta'>N$, all codewords are received in the absence of interference. In this case, the condition $R_c<C_i^*$ guarantees successful decoding at Rx~$i$.  Next, assume the first codeword of Tx~2 is transmitted sometime during the transmission of the $j^{th}$ codeword of Tx~1 as shown in Fig.~\ref{fig1_nogap}. Let the receivers treat interference as noise. We consider two cases:  
 \begin{enumerate}
  \item Let $j=N$. This situation occurs when $r+N-1<\delta'+r<r+N$, i.e., 
  \begin{eqnarray}
  \label{far1}
N-1<\delta'<N.
\end{eqnarray} 
The codewords at Rx~$i$ are decoded successfully if 
  \begin{eqnarray}
  \label{ap1}
R_c<(1-\mu)C_i^*+\mu C_i
\end{eqnarray}
where $\mu$ is the fraction of the first codeword of Tx~2 that overlaps with the $N^{th}$ codeword of Tx~1. We can write (\ref{ap1}) as 
\begin{eqnarray}
\label{ap11}
\mu<1-\rho_i(r),
\end{eqnarray}
where $\rho_i(r)$ is defined in (\ref{lola}). For (\ref{ap11}) to be meaningful, we require $\rho_i(r)<1$. Note that 
\begin{eqnarray}
\label{ap2}
\mu=r+N-(\delta'+r)= N-\delta'.
\end{eqnarray}
By (\ref{ap11}) and (\ref{ap2}), 
\begin{eqnarray}
\label{far2}
\delta'>N-1+\rho_i(r).
\end{eqnarray}
Intersecting the inequalities in (\ref{far1}) and (\ref{far2}), we get
\begin{eqnarray}
N-1+\rho_i(r)<\delta'<N.
\end{eqnarray}
  \item Let $1\leq j\leq N-1$. This occurs when $\delta'+r<r+N-1$, i.e., 
  \begin{eqnarray}
\delta'<N-1.
\end{eqnarray}
 The codewords at Rx~$i$ are decoded successfully if 
 \begin{eqnarray}
R_c<C_i.
\end{eqnarray}
This is due to the fact that at least one of the codewords is received fully in the presence of interference. 
\end{enumerate}
We conclude that all codewords are decoded successfully if exactly one of the three sets of conditions 
\begin{eqnarray}
\delta'>N,\,\,\,\rho_i(r)<1,
\end{eqnarray} 
\begin{eqnarray}
N-1+\rho_i(r)<\delta'<N,\,\,\,\rho_i(r)<1
\end{eqnarray}
and
\begin{eqnarray}
\delta'<N-1,\,\,\,\rho_i(r)<0
\end{eqnarray}
hold. Let $\mathds{1}_{\rho_i(r)<1}$ br the indicator variable for $\rho_i(r)<1$. Then 
\begin{eqnarray}
\Pr(\mathcal{O}_i)&\leq& 1-\Pr(\boldsymbol{\delta}>N\theta_N)\mathds{1}_{\rho_i(r)<1}\notag\\&&-\Pr\big((N-1+\rho_i(r))\theta_N<\boldsymbol{\delta}<N\theta_N\big)\mathds{1}_{\rho_i(r)<1}-\Pr\big(\boldsymbol{\delta}<(N-1)\theta_N\big)\mathds{1}_{\rho_i(r)<0}\notag\\
&=&1-\Pr\big(\boldsymbol{\delta}>(N-1+\rho_i(r))\theta_N\big)\mathds{1}_{\rho_i(r)<1}-\Pr\big(\boldsymbol{\delta}<(N-1)\theta_N\big)\mathds{1}_{\rho_i(r)<0}\notag\\
&=&1-\big(1-F_{\boldsymbol{\delta}}\big((N-1+\rho_i(r))\theta_N\big)\big)\mathds{1}_{\rho_i(r)<1}-F_{\boldsymbol{\delta}}\big((N-1)\theta_N\big)\mathds{1}_{\rho_i(r)<0}.
\end{eqnarray}
Replacing $\theta_N=\frac{1}{Nr\lambda}$ and using the continuity of $F_{\boldsymbol{\delta}}(\cdot)$ in (\ref{cdf_11}), we get 
\begin{eqnarray}
\lim_{N\to\infty}\Pr(\mathcal{O}_i)&\leq& 1-(1-F_{\boldsymbol{\delta}}(1/r\lambda))\mathds{1}_{\rho_i(r)<1}-F_{\boldsymbol{\delta}}\big(1/(r\lambda)\big)\mathds{1}_{\rho_i(r)<0}\notag\\
&=&\left\{\begin{array}{cc}
   0   &  \rho_i(r)<0  \\
    F_{\boldsymbol{\delta}}(\frac{1}{r\lambda}) &   0\leq \rho_{i}(r)<1\\
    1 & \rho_i(r)\geq1
\end{array}\right..
\end{eqnarray} 
Letting $r$ approach $1$ from below, 
\begin{eqnarray}
\lim_{r\to1^-}\lim_{N\to\infty}\Pr(\mathcal{O}_i)\leq \left\{\begin{array}{cc}
   0   &  \lambda\le C_i  \\
    F_{\boldsymbol{\delta}}(\frac{1}{\lambda}) &   C_i<\lambda\leq C_i^*\\
    1& \lambda>C_i^*
\end{array}\right..
\end{eqnarray}
By (\ref{cdf_11}), 
\begin{eqnarray}
F_{\boldsymbol{\delta}}(1/\lambda)=\left\{\begin{array}{cc}
   \frac{1}{\lambda D}(2-\frac{1}{\lambda D})   & \lambda D>1   \\
    1  &   \lambda D\le1
\end{array}\right.=\frac{1}{2}\kappa,
\end{eqnarray}
where $\kappa$ is defined in (\ref{capp}). One may  compare $\frac{1}{2}\kappa$ with the upper bound $\kappa \beta_i$ obtained in (\ref{bet_1}) when $r>1$. By Remark~4 after the statement of Proposition~3, $\beta_i<\frac{1}{2}$ and hence, the probability of outage is smaller when $r>1$.

\section*{Appendix~C; Proof of Proposition~\ref{prop_3}}
Throughout this appendix we denote $\frac{\delta}{\theta_N}$ by $\delta'$ for notational simplicity. We assume Tx~1 becomes active earlier than Tx~2, i.e., $d_1<d_2$. More precisely, let Tx~2 start its activity after Tx~1 sends its $j^{th}$ codeword and before it sends its $(j+1)^{st}$ codeword for some $1\leq j\leq N-1$. 
Recall the activity intervals along the $\overline{t}$-axis given in~(\ref{act_int_as}).  The interference pattern on the transmitted codewords depends on if the interval $\mathcal{I}_{2,1}$ representing the first codeword of Tx~2 intersects with both, exactly one or none of the intervals $\mathcal{I}_{1,j}$ and/or $\mathcal{I}_{1,j+1}$, i.e., the intervals representing the $j^{th}$ and $(j+1)^{st}$ codewords of Tx~1. When Rx~$i$ treats interference as noise, its codewords are decoded successfully~if 
\begin{eqnarray}
R_c<(1-\mu_j) C_i^*+\mu_j C_i,
\end{eqnarray}
or equivalently, 
 \begin{eqnarray}
 \label{boo_ma}
\mu_j<1-\rho_i(r),
\end{eqnarray}
 where $\rho_i(r)=\frac{R_c-C_i}{C_i^*-C_i}$ is defined in (\ref{lola}) and  
\begin{eqnarray}
\label{lola11}
\mu_j:=|\mathcal{I}_{2,1}\cap\mathcal{I}_{1,j}|+|\mathcal{I}_{2,1}\cap\mathcal{I}_{1,j+1}|
\end{eqnarray}
is the fraction of $\mathcal{I}_{2,1}$ that is received in the presence of interference. Here, $|\mathcal{I}|$ denotes the length of an interval $\mathcal{I}$.  Note that for (\ref{boo_ma}) to be meaningful it is necessary that $\rho_i(r)<1$. When Rx~$i$ decodes interference, its codewords are decoded successfully if both conditions
\begin{eqnarray}
R_c<(1-\mu_j) \tilde{C}_i^*+\mu_j \tilde{C}_{i},\,\,\,\,\,R_c<(1-\mu_j) C_i^*+\mu_j C_{i,i'}
\end{eqnarray}
hold. Then we obtain (\ref{boo_ma}) with $\rho_i(r)=\max\big\{\frac{R_c-\tilde{C}_{i}}{\tilde{C}^*_i-\tilde{C}_{i}},\frac{R_c-C_{i,i'}}{C_i^*-C_{i,i'}}\big\}$ defined in (\ref{lola}) for receivers that decode interference. Once again we require~$\rho_i(r)<1$ for (\ref{boo_ma}) to be meaningful. 
 
 We continue by considering four cases: 
\begin{enumerate}
  \item Let $\mathcal{I}_{2,1}\cap \mathcal{I}_{1,j}\neq \emptyset$ and $\mathcal{I}_{2,1}\cap \mathcal{I}_{1,j+1}= \emptyset$. This situation is shown in Fig.~\ref{ref333_111}. It occurs when $jr<r+\delta'<jr+1<r+1+\delta'<(j+1)r$. This is equivalent to
  \begin{eqnarray}
   \label{era_11}
(j-1)r<\delta'<\min\{jr-1,(j-1)r+1\}.
\end{eqnarray}
Moreover, $\mu_j=(jr+1)-(r+\delta')=(j-1)r+1-\delta'$. Putting this in (\ref{boo_ma})~yields 
\begin{eqnarray}
\label{era_22}
\delta'>(j-1)r+\rho_i(r).
\end{eqnarray}
Intersecting the intervals on $\delta'$ in (\ref{era_11}) and (\ref{era_22}), we get 
\begin{eqnarray}
\label{ker_11}
(j-1)r+\rho_i(r)<\delta'<\min\{jr-1,(j-1)r+1\}.
\end{eqnarray}
This interval is nonempty if and only if $\rho_i(r)<r-1$.
 \begin{figure}[t]
\centering
\subfigure[]{
\includegraphics[scale=0.6]{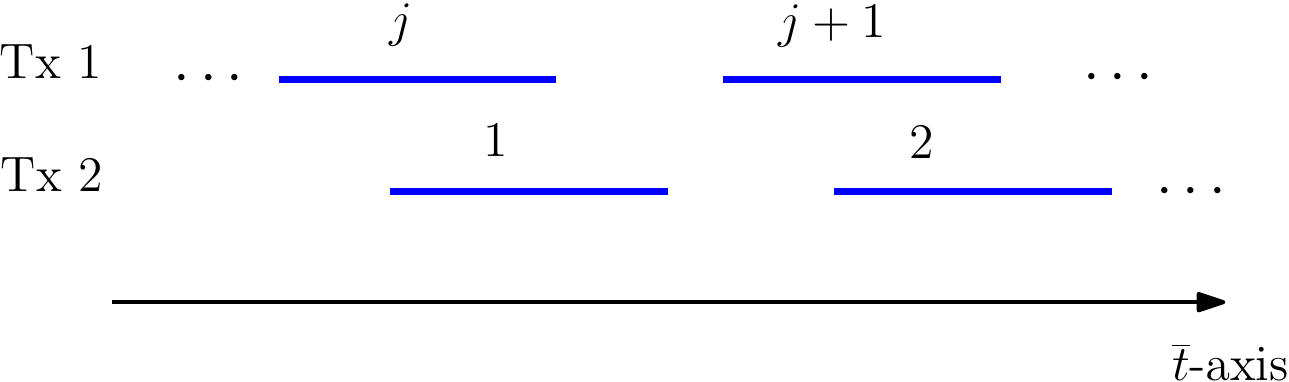}
\label{ref333_111}
}
\subfigure[]{
\includegraphics[scale=0.6]{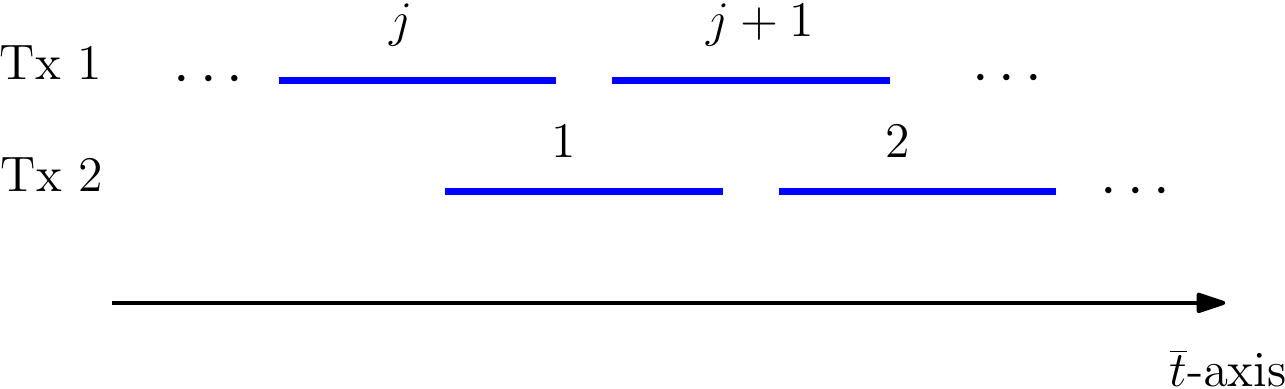}
\label{ref333_222}
}
\subfigure[]{
\includegraphics[scale=0.6]{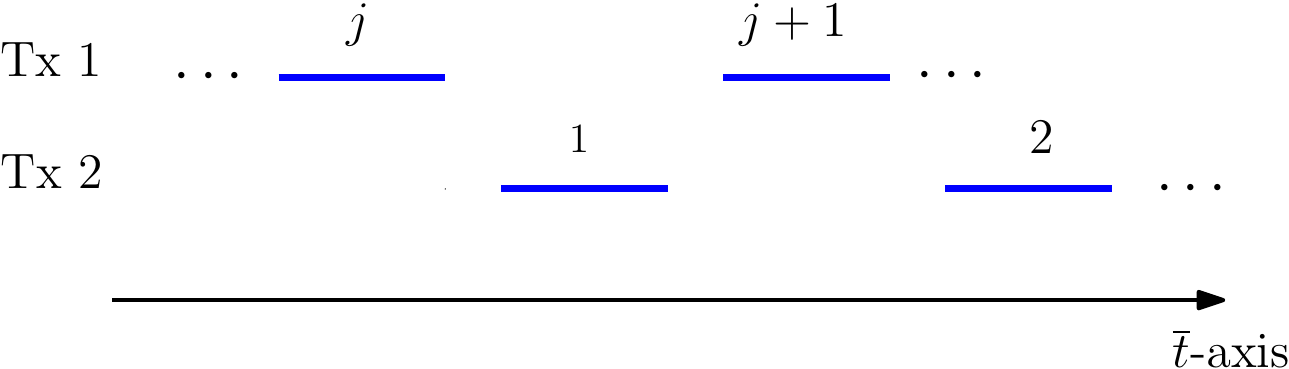}
\label{ref333_333}
}
\subfigure[]{
\includegraphics[scale=0.6]{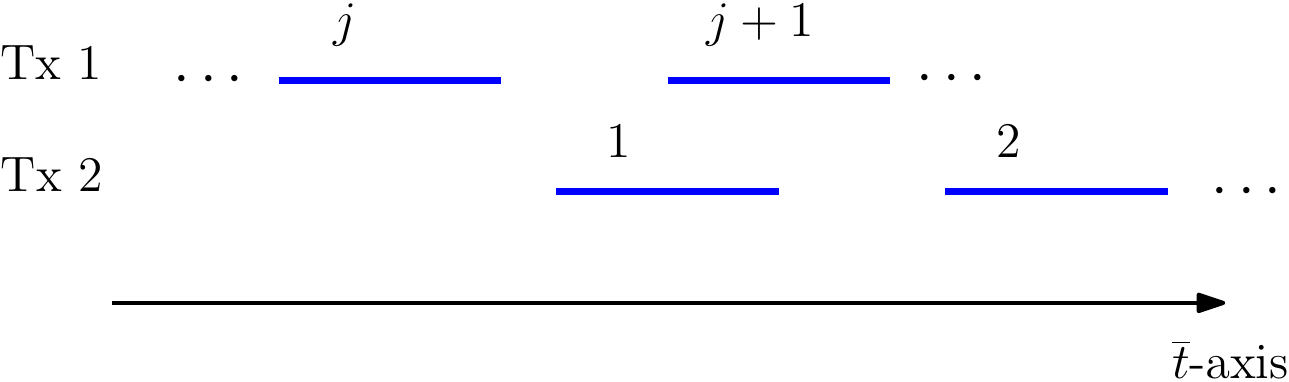}
\label{ref333_444}
}

\label{fig:subfigureExample}
\caption[Optional caption for list of figures]{Tx~2 starts its activity after Tx~1 sends its $j^{th}$ codeword and before it sends its $(j+1)^{st}$ codeword for some $1\leq j\leq N$. The interference pattern on the transmitted codewords depends on if the interval $\mathcal{I}_{2,1}$ representing the first codeword of Tx~2 intersects with intervals $\mathcal{I}_{1,j}$ and/or $\mathcal{I}_{1,j+1}$, i.e., the $j^{th}$ and $(j+1)^{st}$ codewords of Tx~1. The four possibilities are depicted in panels (a) to (d).}
\label{yave_33}
\end{figure}
  \item Let $\mathcal{I}_{2,1}\cap \mathcal{I}_{1,j}\neq\emptyset$ and $\mathcal{I}_{2,1}\cap \mathcal{I}_{1,j+1}\neq \emptyset$. This situation is shown in Fig.~\ref{ref333_222}. It occurs when $jr<r+\delta'<jr+1<(j+1)r<r+1+\delta'$. This simplifies to
  \begin{eqnarray}
  \label{ker_22}
jr-1<\delta'<(j-1)r+1.
\end{eqnarray}
Looking at Fig.~\ref{ref333_222}, we have $\mu_j=[(jr+1)-(r+\delta')]+[(r+1+\delta')-(j+1)r]=2-r$. Putting this in (\ref{boo_ma}), we obtain the inequality $\rho_i(r)<r-1$. We note that this case occurs only when $1<r<2$.
  \item  Let $\mathcal{I}_{2,1}\cap \mathcal{I}_{1,j}=\emptyset$ and $\mathcal{I}_{2,1}\cap \mathcal{I}_{1,j+1}=\emptyset$. This situation is shown in Fig.~\ref{ref333_333}. It occurs when both $jr+1<r+\delta'$ and $r+1+\delta'<(j+1)r$ hold, i.e.,  
 \begin{eqnarray}
 \label{hunbun}
 (j-1)r+1<\delta'<jr-1.
\end{eqnarray}
 We note that this case occurs only when $r>2$. 
  \item Let $\mathcal{I}_{2,1}\cap \mathcal{I}_{1,j}=\emptyset$ and $\mathcal{I}_{2,1}\cap \mathcal{I}_{1,j+1}\neq \emptyset$. This situation is shown in Fig.~\ref{ref333_444}. It occurs when $jr+1<r+\delta'<(j+1)r<r+1+\delta'$ which simplifies to
  \begin{eqnarray}
  \label{era_33}
\max\{jr-1,(j-1)r+1\}<\delta'<jr,
\end{eqnarray}
Looking at Fig.~\ref{ref333_444}, we have $\mu_j=(r+1+\delta')-(j+1)r=1+\delta'-jr$. Putting this in (\ref{boo_ma}), we get the inequality 
\begin{eqnarray}
\label{era_44}
\delta'<jr-\rho_i(r).
\end{eqnarray}
Intersecting the intervals on $\delta'$ given in (\ref{era_33}) and (\ref{era_44}), we get 
\begin{eqnarray}
\label{ker_33}
\max\{jr-1,(j-1)r+1\}<\delta'<jr-\rho_i(r).
\end{eqnarray}
This interval is nonempty if and only if $\rho_i(r)<r-1$.
\end{enumerate}
To recap, successful communication occurs if $\rho_i(r)<1$, $\rho_i(r)<r-1$ and at least one of the inequalities in (\ref{ker_11}), (\ref{ker_22}), (\ref{hunbun}) or (\ref{ker_33}) hold. Uniting the corresponding intervals, we get $\delta'\in \mathcal{A}(j)$ where $\mathcal{A}(j)$ is given in (\ref{khia_11}). 

Next, we address the case $j=N$. The first codeword of Tx~2 intersects the $N^{th}$ codeword of Tx~1 if $Nr<r+\delta'<Nr+1$, i.e., 
\begin{eqnarray}
\label{pank}
(N-1)r<\delta'<(N-1)r+1.
\end{eqnarray}
 We also have $\mu_j=Nr+1-(r+\delta')=(N-1)r+1-\delta'$. Then the condition in (\ref{boo_ma}) becomes $\delta'>(N-1)r-\rho_i(r)$. Combining this inequality with the ones in (\ref{pank}), we get $(N-1)r-\rho_i(r)<\delta'<(N-1)r+1$.
 
 Finally, when $\delta'>(N-1)r+1$, the two users do not interfere at all. In this case, the single condition $\rho_i(r)<1$ guarantees successful communication.

\section*{Appendix~D;~ Proof of Proposition~\ref{prop_7}}
For notational simplicity, we will denote $\rho_i(r)$ by $\rho$ and $\theta_N$ by $\theta$. Let us write $\mathcal{A}_{i,j}=(a_j,b_j)$ for $1\leq j\leq N-1$ and $\mathcal{A}_{i,N}=(a_N,\infty)$ where
\begin{eqnarray}
a_j=(j-1)r+\rho,\,\,\,\,\,b_j=jr-\rho.
\end{eqnarray}
 If $\chi_{1}(i)=1$, then $\rho<\min\{1,r-1\}$ and we get  
\begin{eqnarray}
\label{rav_000}
0<a_1<b_1<a_2<b_2<\cdots<a_{N-1}<b_{N-1}<a_N.
\end{eqnarray}
Then 
\begin{eqnarray}
\label{rav_111}
\Pr\Big(\boldsymbol{\delta}\in \bigcup_{j=1}^{N-1}\theta\mathcal{A}_{i,j}\Big)=\sum_{j=1}^{N-1}\big(F_{\boldsymbol{\delta}}(\theta b_j)-F_{\boldsymbol{\delta}}(\theta a_j)\big),
\end{eqnarray}
where $F_{\boldsymbol{\delta}}(\cdot)$ is given in (\ref{cdf_11}). The right side of (\ref{rav_111}) can be further simplified depending on where the number $D/\theta$ stands among the terms of the sequence in (\ref{rav_000}). Three cases occur: 
\begin{enumerate}
  \item Let $D/\theta<a_1$. Then the right side of (\ref{rav_111}) vanishes due to $F_{\boldsymbol{\delta}}(\theta a_j)=F_{\boldsymbol{\delta}}(\theta b_j)=1$ for every $j$. 
  \item Let $a_m< D/\theta<b_m$ for some $1\leq m\leq N-1$. Then (\ref{rav_111}) becomes
\begin{eqnarray}
\label{rav_222}
\Pr\Big(\boldsymbol{\delta}\in \bigcup_{j=1}^{N-1}\theta\mathcal{A}_{i,j}\Big)&=&\sum_{j=1}^{m-1}\big(F_{\boldsymbol{\delta}}(\theta b_j)-F_{\boldsymbol{\delta}}(\theta a_j)\big) +1-F_{\boldsymbol{\delta}}(\theta a_m)\notag\\
&=&\sum_{j=1}^{m-1}\Big(\frac{\theta b_j}{D}\big(2-\frac{\theta b_j}{D}\big)-\frac{\theta a_j}{D}\big(2-\frac{\theta a_j}{D}\big)\Big) +1-\frac{\theta a_m}{D}\big(2-\frac{\theta a_m}{D}\big)\notag\\
&=&\sum_{j=1}^{m-1}\frac{\theta}{D}\big(2-\frac{\theta}{D}(a_j+b_j)\big)(b_j-a_j) +\big(1-\frac{\theta a_m}{D}\big)^2\notag\\
&\stackrel{(a)}{=}&(r-2\rho)\frac{\theta}{D}\sum_{j=1}^{m-1}\big(2-\frac{r\theta}{D}(2j-1)\big)+\big(1-\frac{\theta a_m}{D}\big)^2\notag\\
&=&(r-2\rho)\frac{\theta}{D}\Big(2(m-1)-\frac{r\theta}{D}\sum_{j=1}^{m-1}(2j-1)\Big)+\big(1-\frac{\theta a_m}{D}\big)^2\notag\\
&\stackrel{(b)}{=}&(r-2\rho)\frac{\theta}{D}\big(2(m-1)-\frac{r\theta}{D}(m-1)^2\big)+\big(1-\frac{\theta a_m}{D}\big)^2,
\end{eqnarray}
where $(a)$ is due to $a_j+b_j=(2j-1)r$ and $b_j-a_j=r-2\rho$ and $(b)$ is due to $\sum_{j=1}^{m-1}(2j-1)=(m-1)^2$.
  \item Let $b_m< D/\theta<a_{m+1}$ for some $1\leq m\leq N-1$. Following similar lines of reasoning as in~(\ref{rav_222}),  we get
  \begin{eqnarray}
\Pr\Big(\boldsymbol{\delta}\in \bigcup_{j=1}^{N-1}\theta\mathcal{A}_{i,j}\Big)=\sum_{j=1}^{m}\big(F_{\boldsymbol{\delta}}(\theta b_j)-F_{\boldsymbol{\delta}}(\theta a_j)\big)=(r-2\rho)\frac{\theta}{D}\big(2m-\frac{r\theta}{D}m^2\big).
\end{eqnarray}
\item Let $D/\theta>a_N$. Then we obtain the expression on the right side of (\ref{rav_222}) with $m$ replaced by $N$. 
\end{enumerate}
We continue by interpreting the integer $m$ mentioned in above. The condition $b_m<D/\theta<a_{m+1}$ is equivalent to 
\begin{eqnarray}
D/\theta-\rho<mr<D/\theta+\rho
\end{eqnarray} 
and the condition $a_m<D/\theta<b_{m}$ is equivalent to 
\begin{eqnarray}
D/\theta+\rho<mr<D/\theta-\rho+r.
\end{eqnarray} 
Hence, $mr$ is the (unique) multiple of $r$ that lies between the two numbers $D/\theta-\rho$ and $D/\theta-\rho+r$ concluding that $m=\lceil (D/\theta-\rho)/r\rceil$. Finally, the cases $D/\theta<a_1$ and $D/\theta>a_N$ correspond to $m=0$ and $m\geq N$, respectively.     

\section*{Appendix E; Proof of Proposition~4}
All we need to determine is the quantity $r_0$ defined in Theorem~(\ref{thm1}). In what follows, we invoke Lemma~1 repeatedly. 
\begin{enumerate}
  \item Assume $\lambda<\min\{C_1,\frac{C_1^*}{2}\}$ and $C_2\leq \lambda<\frac{C_2^*}{2}$. Then the solution to $\rho_1(r)<\min\{1,r-1\}$ is 
  \begin{eqnarray}
  \label{ionic_1}
1<r<\frac{C_1^*}{\lambda}
\end{eqnarray}
and the solution to $\rho_2(r)<\min\{1,r-1\}$ is 
\begin{eqnarray}
\label{ionic_2}
\frac{C_2^*-2C_2}{C_2^*-C_2-\lambda}<r<\frac{C^*_2}{\lambda},
\end{eqnarray}
where $1\le\frac{C_2^*-2C_2}{C_2^*-C_2-\lambda}<2$. Since $\frac{C_1^*}{\lambda},\frac{C_2^*}{\lambda}>2$, the intersection of the two intervals in (\ref{ionic_1}) and (\ref{ionic_2}) is given by $\frac{C_2^*-2C_2}{C_2^*-C_2-\lambda}<r<\frac{1}{\lambda}\min_{i=1,2} C_i^*$. Hence, $r_0$ is given by 
\begin{eqnarray}
r_0=\frac{C_2^*-2C_2}{C_2^*-C_2-\lambda}.
\end{eqnarray}
It is easy to see that $\beta_1(r_0)<\beta_2(r_0)$. Then
 \begin{eqnarray}
\epsilon^{(TIN)}(\lambda)&=&\kappa \beta_2(r_0)\notag\\
&=&\kappa \frac{\lambda-\frac{C_2}{r_0}}{C_2^*-C_2}\notag\\
&=&\frac{\kappa}{C_2^*-C_2}\bigg( \lambda-\frac{C_2}{\frac{C_2^*-2C_2}{C_2^*-C_2-\lambda}}\bigg)\notag\\
&=&\kappa \frac{\lambda-C_2}{C^*_2-2C_2}
\end{eqnarray}
   \item Assume $\frac{C_1^*}{2}\le\lambda<C_1$ and $C_2\leq \lambda<\frac{C_2^*}{2}$. Then the solution to $\rho_1(r)<\min\{1,r-1\}$ is 
  \begin{eqnarray}
  \label{ionic_3}
1<r<\frac{C_1^*-2C_1}{C_1^*-C_1-\lambda}
\end{eqnarray}
where $1<\frac{C_1^*-2C_1}{C_1^*-C_1-\lambda}\le2$ and the solution to $\rho_2(r)<\min\{1,r-1\}$ is 
\begin{eqnarray}
\label{ionic_4}
\frac{C_2^*-2C_2}{C_2^*-C_2-\lambda}<r<\frac{C^*_2}{\lambda},
\end{eqnarray}
where $1\le\frac{C_2^*-2C_2}{C_2^*-C_2-\lambda}<2$. The intersection of the two intervals in (\ref{ionic_3}) and (\ref{ionic_4}) is nonempty if and only if $\frac{C_2^*-2C_2}{C_2^*-C_2-\lambda}<\frac{C_1^*-2C_1}{C_1^*-C_1-\lambda}$ and is given by 
\begin{eqnarray}
\frac{C_2^*-2C_2}{C_2^*-C_2-\lambda}<r<\frac{C_1^*-2C_1}{C_1^*-C_1-\lambda}.
\end{eqnarray}
It follows that 
\begin{eqnarray}
r_0=\frac{C_2^*-2C_2}{C_2^*-C_2-\lambda}.
\end{eqnarray}
It is easy to check that $\beta_1(r_0)<\beta_2(r_0)$. Then 
\begin{eqnarray}
\epsilon^{(TIN)}(\lambda)=\kappa \beta_2(r_0)=\kappa \frac{\lambda-C_2}{C^*_2-2C_2}.
\end{eqnarray}
  \item Assume $\lambda<\min\{C_2,\frac{C_2^*}{2}\}$ and $C_1\leq \lambda<\frac{C_1^*}{2}$. This is similar to case (a) in above where the positions of the indices ``1'' and ``2'' are interchanged. We get 
  \begin{eqnarray}
r_0=\frac{C_1^*-2C_1}{C_1^*-C_1-\lambda}
\end{eqnarray}
and 
\begin{eqnarray}
\epsilon^{(TIN)}(\lambda)=\kappa \beta_1(r_0)=\kappa \frac{\lambda-C_1}{C^*_1-2C_1}.
\end{eqnarray}
\item Assume $\frac{C_2^*}{2}\le\lambda<C_2$ and $C_1\leq \lambda<\frac{C_1^*}{2}$. This is similar to case (b) in above where the positions of the indices ``1'' and ``2'' are interchanged. There is a solution if and only if $\frac{C_1^*-2C_1}{C_1^*-C_1-\lambda}<\frac{C_2^*-2C_2}{C_2^*-C_2-\lambda}$. We get 
 \begin{eqnarray}
r_0=\frac{C_1^*-2C_1}{C_1^*-C_1-\lambda}
\end{eqnarray} 
and 
\begin{eqnarray}
\epsilon^{(TIN)}(\lambda)=\kappa \beta_1(r_0)=\kappa \frac{\lambda-C_1}{C^*_1-2C_1}.
\end{eqnarray}
\item Finally, assume $C_1\leq \lambda<\frac{C_1^*}{2}$ and $C_2\leq \lambda<\frac{C_2^*}{2}$. The solution to $\rho_i(r)<\min\{1,r-1\}$ is given by $\frac{C_i^*-2C_i}{C_i^*-C_i-\lambda}<r<\frac{C^*_i}{\lambda}$ for both $i=1,2$. The intersection of these two intervals is nonempty and is given by $\max_{i=1,2}\frac{C_i^*-2C_i}{C_i^*-C_i-\lambda}<r<\frac{1}{\lambda}\min_{i=1,2}C^*_i$.  Then  
\begin{eqnarray}
\label{gbc_1}
r_0=\max_{i=1,2}\frac{C_i^*-2C_i}{C_i^*-C_i-\lambda}.
\end{eqnarray}
Let $i=i_0$ be the value for the index $i\in\{1,2\}$ that achieves the maximum in~(\ref{gbc_1}). Then it is easy to see that $\max\{\beta_1(r_0),\beta_2(r_0)\}=\beta_{i_0}(r_0)$ and we get 
\begin{eqnarray}
\epsilon^{(TIN)}(\lambda)=\kappa \beta_{i_0}(r_0)=\kappa \frac{\lambda-C_{i_0}}{C^*_{i_0}-2C_{i_0}}.
\end{eqnarray}
\end{enumerate}



\begin{thebibliography}{10}
\providecommand{\url}[1]{#1} \csname url@rmstyle\endcsname
\providecommand{\newblock}{\relax} \providecommand{\bibinfo}[2]{#2}
\providecommand\BIBentrySTDinterwordspacing{\spaceskip=0pt\relax}
\providecommand\BIBentryALTinterwordstretchfactor{4}
\providecommand\BIBentryALTinterwordspacing{\spaceskip=\fontdimen2\font
plus \BIBentryALTinterwordstretchfactor\fontdimen3\font minus
  \fontdimen4\font\relax}
\providecommand\BIBforeignlanguage[2]{{%
e^{}andafter\ifx\csname l@#1\endcsname\relax
\typeout{** WARNING: IEEEtran.bst: No hyphenation pattern has been}%
\typeout{** loaded for the language `#1'. Using the pattern for}%
\typeout{** the default language instead.}%
\else \language=\csname l@#1\endcsname \fi #2}}



\bibitem{et1}
R. H. Etkin, D. N. C. Tse and H. Wang, ``Gaussian interference channel capacity to within one bit'', \emph{ IEEE Trans. Inf. Theory}, vol.~54, no.~12, pp.~5534-5562, Dec.~2008.

  
  \bibitem{HK}
  T. S. Han and K. Kobayashi, ``A new achievable rate region for the
interference channel'', \emph{IEEE Trans. Inf. Theory}, vol.~27, no.~1, pp.~49-60, Jan.~1981. 




\bibitem{sim}
O.~Simeone, Y.~Bar-Ness and U.~Spagnolini, ``Stable throughput of cognitive radios with and without relaying capability'', \emph{IEEE Trans. Communications}, vol.~55, no.~12, pp.~2351-2360, Dec.~2007.

\bibitem{pappas}
N.~Pappas, M.~Kountouris and A.~Ephremides, ``The stability region of the two-user interference channel'', \emph{Information Theory Workshop~(ITW)}, 2013.

\bibitem{haenggi}
Y.~Zhong, M.~Haenggi, T.~Q.~S.~Quek and W.~Zhang, ``On the stability of static Poisson networks under random access'', \emph{IEEE Trans. Communications}, vol.~64, no.~7, pp.~2985-2998, July~2016.  

\bibitem{Rao}
R.~R.~Rao and A.~Ephremides, ``On the stability of interacting queues in a multiple-access system'', \emph{ IEEE Trans. Inf. Theory}, vol.~34, no.~5, pp.~918-930, Sept.~1988.


\bibitem{Chandar11}
V.~Chandar, A.~Tchamkerten and G.~Wornell, ``Optimal sequential frame synchronization'',  \emph{ IEEE Trans. Inf. Theory}, vol.~54, no.~8, pp.~3725-3728, August~2008.




 \end{thebibliography}
\end{document}